\newtheorem{theorem}{Theorem}
\newtheorem{proposition}[theorem]{Proposition}
\newtheorem{lemma}[theorem]{Lemma}
\theoremstyle{definition}
\theoremstyle{remark}
\newtheorem{remark}[theorem]{Remark}
\def\BibTeX{{\rm B\kern-.05em{\sc i\kern-.025em b}\kern-.08em
    T\kern-.1667em\lower.7ex\hbox{E}\kern-.125emX}}
\begin{document}

\title{Staying Fresh: Efficient Algorithms for Timely Social Information Distribution}

\author{Songhua LI, Lingjie Duan,~\IEEEmembership{Senior Member,~IEEE}
\IEEEcompsocitemizethanks{\IEEEcompsocthanksitem Songhua Li and Lingjie Duan are with the Pillar of Engineering Systems and Design, Singapore University of Technology and Design, Singapore 487372.
 Email: \{songhua\_li,lingjie\_duan\}@sutd.edu.sg.}
}

\markboth{Journal of \LaTeX\ Class Files,~Vol.~14, No.~8, August~2021}%
{Shell \MakeLowercase{\textit{et al.}}: A Sample Article Using IEEEtran.cls for IEEE Journals}


\maketitle

\begin{abstract}
In location-based social networks (LBSNs), users sense urban point-of-interest (PoI) information in the vicinity and share such information with friends in online social networks. Given users' limited social connections
and severe lags in disseminating
fresh PoI to all, major LBSNs aim to enhance users' social PoI sharing by selecting $k$ out of $m$ users as hotspots and broadcasting their fresh PoI information to the entire user community. This motivates us to study a new combinatorial optimization problem that involves the interplay between an urban sensing network and an online social
network. We prove that this problem is NP-hard and also renders existing approximation solutions not viable. 
Through analyzing the interplay effects between the two networks, we successfully transform the involved PoI-sharing process across two networks to matrix computations for deriving a closed-form objective to hold desirable properties (e.g., submodularity and monotonicity). This finding enables us to develop a polynomial-time algorithm that guarantees a ($1-\frac{m-2}{m}(\frac{k-1}{k})^k$) approximation of the optimum. Furthermore, we allow each selected user to move around and sense more PoI information to share and propose an augmentation-adaptive algorithm with decent performance guarantees. 
Finally, our theoretical results are corroborated by our simulation findings using both synthetic and real-world datasets.
\end{abstract}
\begin{IEEEkeywords}
Point of interest (PoI), location-based social network (LBSN),  NP-hard, approximation algorithm, resource augmentation technique
\end{IEEEkeywords}


\maketitle

\section{Introduction}\label{section_introduction}
\IEEEPARstart{R}{ecent} years have witnessed the proliferation of sensor-rich mobile devices, which paves the way for people to collect and report useful point-of-interest (PoI) information that enhances their daily lives \cite{capponi2019survey,hamrouni2020spatial}.  For example, users of mobile devices, such as smartphones, can upload location-tagged photos and videos to online content platforms like Flickr \cite{flickrapp} and TikTok \cite{tictokapp}, allowing them to share personal experiences and help others discover new places, including popular restaurants, hidden landmarks, and local events and promotions.
They can report restaurant waiting times to online queue platforms, such as Yelp Waitlist \cite{yelpwaitlistsapp}, enabling others to plan their visits more efficiently and reducing overall waiting times. Users can check in and add new locations (e.g.,  businesses, restaurants, landmarks, etc.) to the map on life-logging platforms such as Gowalla \cite{Gowalladata} and Foursquare Swarm \cite{foursquareswarm}, enriching the shared map of PoI and supporting social interaction around new or less-known places. Additionally, users can also report traffic conditions (e.g., accidents, road closures, etc.) to online navigation platforms like Waze \cite{wazeapp} and Google Maps \cite{googlemaps},  aiding in traffic updates and helping others navigate more efficiently. 

The above online platforms, also known as location-based social networks (LBSNs), bridge urban sensing and online social networks for facilitating information collection and sharing among many users \cite{zheng2011location,bao2015recommendations}. As LBSN users are both PoI contributors and consumers, they share their PoI collections with friends for mutual information benefits \cite{zheng2011location,li2017dynamic}.
It is also worth noting that PoI sharing in LBSNs can be conducted in a privacy-preserving manner, enabled by a range of schemes developed in recent years. These schemes are often supported by federated learning-based approaches \cite{perifanis2023fedpoirec,guo2021prefer}, differential privacy techniques \cite{chen2020practical,huo2021privacy,chen2022differential,zhao2020local}, cryptographic approaches \cite{sun2024towards, zhang2025survey}, and other techniques \cite{hong2023location,hong2022protecting} for location privacy protection, which collectively ensure the security of PoI sharing and recommendations in LBSNs. Nonetheless, the majority of LBSN users typically have limited social connections, which hampers the overall efficiency of PoI sharing across the network. For example, a recent study \cite{chakravarthy2023nnpec} indicates that, among a total of 196,591 individuals, the average number of social connections for a Gowalla user is only under 24. Furthermore, it has been reported that sharing one's PoIs with friends of friends is time-consuming and experiences severe lags in covering the entire social network \cite{cha2009measurement}.  

Consequently, users' limited social connections and slow sharing speed hinder their exchange of a wide range of time-sensitive PoI information. To address this issue, major LBSNs often select certain users to broadcast their PoI information to the entire community, thereby enhancing social PoI sharing among all users. 
For instance, TikTok selectively showcases some location-tagged restaurant information posted by certain users in popular locations to others, catering to the preferences of the majority \cite{tiktokrestaurant}.

Besides the communication overhead above, information overload can exhaust users and even disrupt their continued engagement in LBSN platforms \cite{fu2020social}. Notably, LBSN users are often influenced by the initial pieces of PoI information broadcast by the system \cite{fu2020social}, a phenomenon known as the anchoring effect \cite{seo2021point}. Consequently, LBSNs typically broadcast only a small amount of PoI information to all users \cite{tiktokrestaurant}. 
Therefore, we are motivated to ask the following question.

\textbf{Question 1:} \textit{When an urban sensing network meets an online social network, how can an LBSN
optimally select $k$ out of a total of $m$ users as hotspots to track and broadcast their PoI information to the user community? 
}

We aim to reveal the ramifications of Question~1 from a theoretical perspective and pursue performance-guaranteed solutions. 
When finding $k$ out of $m$ users to track and broadcast
their locally sensed PoI to all, one may expect the combinatorial nature of the problem. Thereby, we first survey the most related classical problems, which include the social information maximization (SIM) problem in social networks and the covering problems in sensing networks. 

The SIM problem seeks a subset $S\subseteq V$ of $k$ user nodes in a given social graph $G=(V, E)$
to maximize the influence spread in graph $G$, which is known to be NP-hard and can be approximated to a ratio $(1-\frac{1}{e})$ of the optimum \cite{kempe2003maximizing,li2022influence}, where $e$ is the base of natural logarithms. Note that the SIM problem solely focuses on optimization in a single social network, or simply assumes that each social user collects an equal amount of non-overlapping PoI in the urban-sensing network. Differently, users' PoI collections in our problem may exhibit content overlap and volume variation. Hence, it becomes critical to take into account the interplay effect between the social and urban-sensing graphs for optimization purposes in our problem. 

Related covering problems mainly include the vertex cover and set cover problems, both of which are NP-hard \cite{hochbaum1982approximation,karakostas2009better}. Given a graph $G(V,E)$, the vertex cover problem (VCP) aims to find the least vertices to include at least one endpoint of each edge in $E$. State-of-the-art approximation ratio for VCP achievable in polynomial time is $2-\Theta(\frac{1}{\sqrt{\log |V|}})$ \cite{karakostas2009better}.
Given a ground set $\mathcal{U}=\{1,2,...,n\}$ of elements, a collection $\mathcal{S}=\{S_1,...,S_m\}$ of $m$ sets such that $\bigcup\limits_{S_i\in \mathcal{S}}S_i=\mathcal{U}$, and a budget $k(<m)$, the set cover problem (SCP) aims to find a sub-collection of $k$ sets from $\mathcal{S}$ to cover as many elements in $\mathcal{U}$ as possible. State-of-the-art approximation guarantee for SCP is known to be $(1-\frac{1}{e})$ \cite{feige1998threshold}. 
However, both VCP and SCP problems largely overlook the effects of social information sharing. Applying their solutions to our optimization problem, which integrates the sensing and social networks, results in significant efficiency loss as demonstrated later in Section~\ref{section_experiment}.

In addition, another line of related works focuses on optimizing information dissemination within social networks, where information spreads progressively from selected users through their social ties, in contrast to the simultaneous dissemination to all users considered in this paper. For instance, Lu et al. \cite{lu2015towards} propose a probabilistic model to characterize information dissemination in social networks, accompanied by a set-cover-based algorithm. Later, Hsu et al. \cite{hsu2019scheduling} further study fresh information dissemination on wireless broadcast networks, where a base station updates some network users on timely information generated randomly from certain given sources. Based on Markov Decision Process (MDP) techniques, Hsu et al. \cite{hsu2019scheduling} propose a scheduling algorithm to maintain information freshness among users in the network. While these works demonstrate favorable empirical performances through their extensive simulations, their algorithms do not come with any performance guarantee. Li et al. \cite{li2024age} examine a multi-stage information dissemination process under a deterministic diffusion model and develop a performance-guaranteed algorithm for maximizing the freshness of information received by each user in the network. For more discussions about this line of related works, we refer readers to the survey work \cite{banerjee2020survey}. Nonetheless, these works overlook the relative positioning of LBSN users, i.e., their connectivity within the urban sensing network, which determines the degree of PoI overlap among users. This overlap plays a crucial role in the user selection process for effective social information sharing. Thereby, ignoring urban sensing dynamics greatly reduces the effectiveness of their solutions for the problem to be addressed in this paper

The key challenge to answer Question~1 is outlined below.

\textbf{Challenge 1.} \textit{The urban sensing network is inherently different from the online social network, making it difficult to formulate a cohesive objective to maximize the PoI information accessibility to users. Indeed, this problem is NP-hard as proved in Section~\ref{sec_problemstatement}, and renders existing solutions unviable.}

%
Further, we extend our study to a more general scenario where each selected user is granted the capability to move around to sense more PoI information. Suppose each selected user could move $n$ hops forward from her current location and gather PoI information along her way, another question naturally arises as follows:

 \begin{table*}[t]
  \caption{Main result summary, where $\varpi$ and $m$   denote node sizes in the urban sensing and social graphs, respectively, and {\color{black}$\tau$ denotes the number of $n$-hop paths in $G_1$ that start from a user node}.}
  \begin{center}
 \begin{tabular}{|c|c|c|c|}
    \hline
    \textbf{Scenarios}& \textbf{Hardness}& \textbf{Approximation Guarantee} & \textbf{Time Complexity}\\
    \hline
\makecell[c]{Static crowd-sensing \\(see \textbf{Section} \ref{sec_fundamentalalg})}&\makecell[c]{NP-hard\\(see \textbf{Proposition}~\ref{nphardnessingeneral})}& \makecell[c]{$1-\frac{m-2}{m}(\frac{k-1}{k})^k$\\(see \textbf{Theorem} \ref{ub_OBJ1F})} &  \makecell[c]{$O(k(m-k)m^{3.372})$\\(see \textbf{Theorem} \ref{ub_OBJ1F})}\\
 \hline
\multirow{2}*{
\makecell[c]{Mobile crowd-sensing\\(see \textbf{Section} \ref{sec_resourceaugmented})}}&\multirow{2}*{\makecell[c]{NP-hard\\(see \textbf{Proposition}~\ref{nphardnessingeneral})}}&  \makecell[c]{$\frac{1}{k}-\frac{\varpi-2}{\varpi}\frac{(k-1)^k}{k^{k+1}}$ with no-augmentation,\\$\frac{g}{k}[1-\frac{\varpi-2}{\varpi}(\frac{k-1}{k})^k]$ with $g$-augmentation,\\(see \textbf{Theorem} \ref{theorem_n_hop_forward})} &
\makecell[c]{{\color{black}$O(k(\tau-k)\varpi^{2.372})$}\\(see \textbf{Theorem} \ref{theorem_n_hop_forward})}
\\
\cline{3-4}
~&~&\makecell[c]{$1-\frac{\varpi-2}{\varpi}(\frac{k-1}{k})^k${when each sensing node}\\ is associated with a user. 
(\textbf{Theorem} \ref{improvedratio_nhop})}&\makecell[c]{{\color{black}$O(k(\tau-k)\varpi^{2.372})$}\\ {(see \textbf{Theorem} \ref{improvedratio_nhop})}}\\
\hline
    \end{tabular}
\end{center}
    \vskip -0.1in
    \label{mainresult_tab}
\end{table*}

 \textbf{Question~2:} \textit{Given the initial locations of users in the urban sensing network and their social connections, how can an LBSN jointly select $k$ out of $m$ users and design their $n$-hop paths for better collecting and sharing PoI information?}
 
The literature related to Question~2 primarily focuses on designing a single route for a single user, e.g., the well-known traveling salesman problem (TSP) \cite{christofides1976worst} and the watchman problem \cite{livne2022optimally}. However, these problems ignore the social sharing effects and focus solely on distance minimization objectives, which are distinct from ours. For example, the TSP requires visiting every node exactly once and returning to the origin, while the watchman problem seeks the shortest path/tour in a polygon so that every point is visible from some point on the path \cite{mitchell2013approximating}.
Technically, the interplay effects between the sensing and social networks in our joint optimization expand the search space of our problem, rendering existing solutions unfeasible. Other related works mainly focus on mobile crowd-sourcing  \cite{ganti2011mobile,wang2018multi,lai2022optimized, xu2022approximation,zhou2022online, wang2022dynamic}, PoI recommendations \cite{bao2015recommendations,seo2021point,wang2018exploiting}, and energy-efficient optimization in mobile networks
\cite{sun2021time,sun2019energy}. However, these works also overlook the critical aspect of information sharing within online social networks. Moreover, recent works on users' data sharing in social networks \cite{li2017dynamic,gong2017social} either fail to take the urban sensing network into consideration or lack approximation guarantees for their solutions.

The key challenge posed by Question~2 is outlined below.

 \textbf{Challenge 2.} \textit{
Both the intricate topology of a general sensing network and the involved PoI sharing process in the social network make it NP-hard to
coordinate $m$ individual $n$-hop paths for those selected users to achieve the maximum PoI accessibility among users. }

%

%


Our key novelties and main contributions are summarized below and also in Table~\ref{mainresult_tab}. Due to page limits, omitted proofs can be found in our appendix as supplementary material.  
\begin{itemize}
    \item 
 \textit{\color{black}New Combinatorial Problem to Enhance Social Sharing of Fresh PoI Information.}  When urban sensing meets social sharing, we introduce a new combinatorial optimization problem to select $k$ out of $m$ users to maximize the accessibility of PoI information to all users. We develop, to the best of our knowledge, the first theoretical foundations in terms of both NP-hardness and approximation guarantee for the problem. We practically allow users to have different PoI preferences and move around to sense more PoI to share.
\item
\textit{Polynomial-time Approximation Algorithm with Performance Guarantees}.
Through meticulous characterization 
of the interplay effects between social and sensing networks, we successfully transform
the involved social PoI-sharing process to matrix multiplication/computation, which serves as a building block for formulating our tractable optimization objective. Additionally, this transformation also allows for various adaptations in our subsequent solutions. In a fundamental scenario where users are static and only collect PoI information within their vicinity, we present a polynomial-time algorithm that successfully guarantees an approximation $1-\frac{m-2}{m}(\frac{k-1}{k})^k>0.632$ of the optimum. This success is attributed to the desirable properties identified in our objective function.
\item
\textit{Resource-augmented Algorithm for Mobile Crowd-sensing with $n$-hop-forward.} We extend our approximate framework to address a more general scenario where each selected user could further move along a $n$-hop-path to mine and share more PoI information. To this end, we circumvent the routing intricacies by transforming the problem into an optimization alternative, fitting within our approximate framework for a simplified scenario of static crowd-sensing. Further, we introduce a novel resource-augmentation technique and propose an augmentation-adaptive algorithm that guarantees bounded approximations. These approximation guarantees range from $\frac{1}{k}(1-\frac{1}{e})$ to $1-\frac{1}{e}> 0.632$ when adjusting the augmentation factors from one to the budget $k$. The effectiveness of our algorithms is demonstrated by simulations using both synthetic and real-world datasets across different network topologies.
\end{itemize}

The rest of this paper is organized as follows. Section~\ref{sec_problemstatement} presents the system model for social PoI sharing, integrating the sensing and social networks, along with the NP-hardness proof for the problem. Section~\ref{sec_fundamentalalg} provides a polynomial-time approximation algorithm for the fundamental scenario of static crowd-sensing, and 
Section~\ref{sec_resourceaugmented} extends the approximate scheme to a generalized scenario of mobile crowd-sensing. Section~\ref{section_experiment} corroborates our theoretical findings with simulations. Finally, Section~\ref{section_conclusion} concludes this paper.
\section{Problem Statement And Preliminaries}\label{sec_problemstatement}
\begin{figure}[!t]
    \centering
\includegraphics[width=8cm]{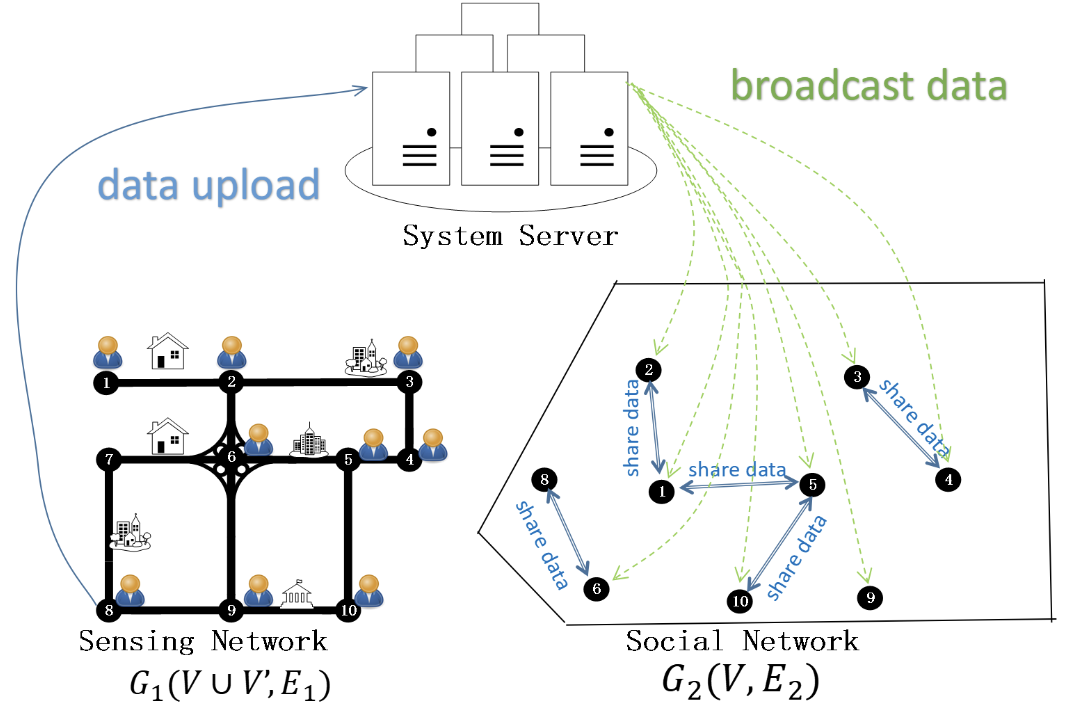}
    \caption{An illustration example of choosing $k=1$ hotspot to show the schematic setting of an LBSN system, where $m=9$ users (at nodes 1-6, 8-10) sense the fresh PoI information around from the incident edges in the sensing graph $G_1$. Then, they share such PoIs with their friends in the online social graph $G_2$. To enhance social PoI sharing, the system selects user node 8 as a hotpot to track and broadcast her PoI collection to all users immediately.}
    \label{systemfigure}
\end{figure}
We consider a budget-aware location-based social network (LBSN) that involves $m$ users as in Fig.~\ref{systemfigure},
where each user collects PoI information from her vicinity in an urban sensing graph $G_1$ and shares her PoI collection with her immediate friends/neighbors in another online social graph $G_2$. 
Suppose, \textit{w.l.o.g.}, that the sensing graph $G_1=(V\cup V',E_1)$ consists of two sets of nodes representing distinct locations, in which a location node in $V=\{v_1,...,v_m\}$ has a user while a location node in $V'$ does not. We also refer to a node $v_i\in V$ that has a user as a \textit{user node}. Totally, there are $\varpi$ location nodes in $G_1$, i.e., $|V\cup V'|=\varpi$. An edge in $E_1$ of $G_1$ indicates a specific road (with PoIs to be sensed) connecting the two location nodes. Different edges/roads have distinct PoI content but share equal PoI weight. The cumulative PoI information gathered from any edge subset of $E_1$ is determined by the count of distinct edges within that subset. The social graph is defined over the same group of $m$ users and is denoted as $G_2(V,E_2)$, where two user nodes will be connected by an edge in $E_2$ if and only if they are immediate friends.

For each graph $G_i\in\{G_1,G_2\}$ and a node $v$ in $G_i$, we use $N_i(v)$ and $E_i(v)$ to denote the set of $v$'s neighbors in $G_i$ and the set of edges in $G_i$ that are incident to the node $v$, respectively. For any subset $\overline{V}\subseteq V\cup V'$ of nodes, 
$N_1(\overline{V})$ 
is the set of nodes that are directly connected to at least one node in $\overline{V}$, i.e., $N_1(\overline{V})\triangleq \bigcup_{v\in \overline{V}}N_1(v)$, and $E_1(\overline{V})$ is the set of edges that are incident to at least one node in $\overline{V}$, i.e., $E_1(\overline{V})\triangleq \bigcup_{v\in \overline{V}}E_1(v)$. 

Since users prefer fresh PoI information without severe propagation lag, we assume for simplicity that users share their PoI collections only with their immediate neighbors in the social graph $G_2$. Nonetheless, our solutions in Section~\ref{subsection_computing_phi} can be easily adapted to accommodate scenarios where PoI sharing extends over multiple hops in the social graph. 
To enhance the social PoI sharing, the system
will select $k$ out of $m$ users to aggregate and broadcast their PoI collections to all users. Fig.~\ref{systemfigure} illustrates a case where $k=1$, with the PoIs of user node 8 being broadcast to all users. Generally, we denote $s_j\in V$ as the $j$th user selected by the LBSN system server and summarize the first $j$ selected users in set $S_j\triangleq \{s_1,...,s_j\}$. 

Thereby, the fresh PoI information available to a user $v_i\in V$ is comprised of the following three parts:
\begin{itemize}
    \item Part~1 is her own PoI collection;
    \item Part~2 contains those PoIs collected and shared by $v_i$'s neighbors in the social graph $G_2$;
    \item  Part~3 consists of those PoIs that are collected by the $k$ selected users and thus broadcast by the LBSN system. 
\end{itemize}
As outlined below, we examine two typical crowd-sensing scenarios: static crowd-sensing and mobile crowd-sensing. We formulate the objectives for each scenario separately.
\subsection{Two Typical Scenarios for PoI Sensing And Sharing}\label{settings_description}
\subsubsection{Static crowd-sensing scenario}
To answer \textbf{Question~1} introduced earlier in Section~\ref{section_introduction}, we study the \textit{static crowd-sensing scenario} that considers a fundamental case where users are static and only collect PoI information from edges incident to their static locations. When every user $v_i\in V$ is interested in the PoI information from the whole sensing graph $G_1$ for future utilization, we further refer to the scenario as the \textit{static crowd-sensing without PoI preference}. With her own PoI collection $E_1({v_i})$, those PoIs collected and shared by her neighbors (i.e., her immediate friends) in $N_2(v_i)$, and those PoIs collected/sensed by the $k$ selected users in $S_k$ and broadcast by the LBSN, the utility function \footnote{We can easily extend (\ref{obj_function_pre}) to include different weights for different PoI parts, which does not change our later algorithm design and bound analysis.} for a user $v_i\in V$ without PoI preference is defined as follows:
\begin{equation}\label{obj_function_pre}
    \phi_i(S_k)\triangleq|E_1(\{v_i\}\cup N_{2}(v_i)\cup S_k)|,
\end{equation}
in which $E_1(\{v_i\})$, $E_1(N_{2}(v_i))$, and $E_1(S_k)$ refer to the aforementioned parts 1, 2, and 3 of PoI information, respectively.
When a user $v_i\in V$ is just interested in the PoI information along edges in a certain subset $E_1^{(i)}\subseteq E_1$ for her recent use, we further refer to the scenario as the \textit{static crowd-sensing with PoI preference}.  Then,
the PoI utility function for a user $v_i\in V$ with PoI preference can be updated from (\ref{obj_function_pre}) to 
\begin{equation}\label{obj_function_pre_restricted}
\phi_i(S_k)\triangleq|E_1^{(i)}\cap E_1(\{v_i\}\cup N_{2}(v_i)\cup S_k)|.
\end{equation}
%
\subsubsection{Mobile crowd-sensing scenario} To answer \textbf{Question~2} introduced in Section~\ref{section_introduction}, we further consider the \textit{mobile crowd-sensing scenario}, which extends to a more general case where each selected user is allowed to move along a $n$-hop-path from her current location to sense more PoI information to share \cite{capponi2019survey,suhag2023comprehensive}.
We denote the path that the LBSN system recommends to a 
selected user $s_j$ as $p_j=(s_j,s_{j1},s_{j2},...,s_{jn})$,
and summarize those $n$-hop-paths for all the $k$ selected users in the set $P_k\triangleq \{p_1,...,p_k\}$. A selected user is willing to follow the recommended path simply because the LBSN system can provide her with a significant incentive (e.g., monetary rewards, virtual credits, reputation points, etc) \cite{suhag2023comprehensive}. Again, due to budget limits, the LBSN system selects $k$ mobile users in total for PoI collection and sharing. 
To simplify notation, we use $N_1(P_k)$ to denote the set of nodes in $G_1$ that are included in at least one path in $P_k$. 

Now, the PoI collection by a selected user $s_j\in S_k$ becomes $E_1(\{v_j\}\cup N_1(p_j))=E_1(N_1(p_j))$, which consists of those edges incident to the nodes that she visits along $p_j$. By aggregating those PoI collections of all the $k$ selected users, the following PoI aggregation will be broadcast to the public:
\begin{equation*}
    \bigcup_{j=1}^k E_1(N_1(p_j))=E_1(\bigcup_{j=1}^kN_1(p_j))=E_1(N_1(P_k)).
\end{equation*}
Given the $k$ selected users in $S_k$, 
the PoI utility function for a user $v_i\in V$ without PoI preference is updated from (\ref{obj_function_pre}) to
\begin{equation}\label{obj_function_pre_nhop_forwarded}
     \phi_i(S_k)\triangleq|E_1(\{v_i\}\cup N_{2}(v_i)\cup N_1(P_k))|.
\end{equation}
When accounting for PoI preferences, the utility function can be modified from (\ref{obj_function_pre_restricted}) by substituting the term $S_k$ with $N_1(P_k)$.

%

In both static and mobile crowd-sensing scenarios, the average PoI utility for a user, denoted by $\Phi(S_k)$, is viewed as \textit{welfare} and defined as
\begin{equation}\label{obj_function}  \Phi(S_k)\triangleq\frac{1}{m}\sum\limits_{v_i\in V}\phi_i(S_k).
\end{equation}
We notice that $|E_1|$ provides a natural upper bound for $\phi_i(S_k)$ and, consequently, $\Phi(S_k)$.

\textit{Our objective} is to find a subset $S_k\subseteq V$ of $k$ users that maximizes $\Phi(S_k)$, i.e.,  
\begin{equation}\label{initial_objective_globally}
\max\limits_{S_k\subseteq V} \Phi(S_k).
\end{equation} Particularly, in the mobile crowd-sensing scenario, our objective (\ref{initial_objective_globally}) also requires
jointly determining the optimal $n$-hop paths for the $k$ selected users. 

By a non-trivial reduction from the well-known vertex cover problem \cite{hochbaum1982approximation},
we show in the following proposition that our social-enhanced PoI sharing problem in (\ref{initial_objective_globally}) is NP-hard.

\begin{proposition}\label{nphardnessingeneral}
The social-enhanced PoI sharing problem (\ref{initial_objective_globally}) is NP-hard for both static and mobile crowd-sensing scenarios.
\end{proposition}
The NP-hardness of our problem indicates that no optimal solution can be achieved in polynomial time. Hence, we aim at efficient approximation algorithms, which are often evaluated by approximation ratios as introduced below.
\subsection{Approximation Ratio for Performance Evaluation}
  The approximation ratio is a standard metric to evaluate the performances of approximation algorithms, which refers to the worst-case ratio between an algorithm's welfare and the optimal welfare \cite{williamson2011design,xu2022approximation}. Formally, given an instance $I=(G_1, G_2)$ of the problem, let $S_k^*(I)$ denote the set of users selected by an optimal solution. The approximation ratio, denoted as $\rho$, is defined as the ratio of our solution's welfare (with $S_k(I)$ of our selected users) over the optimal welfare, which is taken over all possible instances. Formally, it is
\begin{equation}\label{def_approratio}
\rho\triangleq\inf\limits_{I=(G_1, G_2)}\frac{\Phi(S_k(I))}{\Phi(S_k^*(I))}.
\end{equation}
When there is no ambiguity in the context, we abuse notation to use $S_k$ and $S_k^*$ to denote $S_k(I)$ and $S_k^*(I)$, respectively.

In the following Section~\ref{sec_fundamentalalg}, we focus on algorithm design in the static crowd-sensing scenario as a warm-up, built upon which we will further advance our approximation algorithms for the mobile crowd-sensing scenario later in Section \ref{sec_resourceaugmented}.
\section{Approximation Algorithm for Static Crowd-sensing}\label{sec_fundamentalalg}
  To address the static crowd-sensing problem in (\ref{initial_objective_globally}), we introduce Algorithm \ref{greedyalgforfiniteidea1}, referred to as GUS. The key idea of GUS is to iteratively select each of the $k$ users from $V$ in a way that maximizes the marginal welfare contributed to the system at each step.
\begin{algorithm}
\caption{\textsc{Greedy-User-Selection} (GUS)}\label{greedyalgforfiniteidea1}
\textbf{Input}:  $G_1=(V,E_1)$, $G_2=(V,E_2)$ and $k$.\\
\textbf{Output}: $S_k=\{s_1, ..., s_k\}$.
\begin{algorithmic}[1] 
\FOR{$s_i= s_1, ..., s_k$}
    \STATE Select $s_i\leftarrow\arg\max\limits_{v_x\in V}\left\{{\Phi}(S_{i-1}\cup\{v_x\})-{\Phi}(S_{i-1})\right\}$.
\ENDFOR
\end{algorithmic}
\end{algorithm}

Despite its simple looking in step 2, running Algorithm~\ref{greedyalgforfiniteidea1} requires inputting the precise value of each $\Phi(S_i)$ directly from the objective function (\ref{obj_function}). The computation method for $\Phi(S_i)$ also determines the time complexity of Algorithm \ref{greedyalgforfiniteidea1}. 

Benefiting from our characterizations of the interaction effects between the sensing network $G_1$ and the social network $G_2$, we develop an efficient method for computing $\Phi(S_i)$ by
translating the involved
social PoI sharing process into matrix computation. Of independent interest, our approach to computing $\Phi(S_i)$ also enables us to readily adapt our algorithms to a range of extensions beyond the scope of this paper's problems. 
%
\subsection{Tractable Approach for Computing $\Phi(S_i)$}\label{subsection_computing_phi}
  We begin with the basic case of  \textit{static crowd-sensing without PoI preference}, and then extend our approach to the case with PoI preference. 
\subsubsection{Static Crowd-sensing without PoI Preference}
We first look at $\Phi(\varnothing)$ when no one has yet been selected to broadcast her PoI collection. This is critical as it establishes a foundation for sequentially computing each $\Phi(S_{i})$ for $i\in \{1,2,...,k\}$. 
 Our approach for computing $\Phi(\varnothing)$ consists of the following four steps.

 \textbf{Step 1.} Construct the \textit{sensing} matrix $\textbf{A}=(a_{i,j})_{1\leq i,j\leq \varpi}$ based on the given sensing graph $G_1$: the entry
$a_{i,j}$, located at the $i$-th row and the $j$-th column of the matrix $\textbf{A}$, is set to one if there is an edge in the sensing graph $G_1$ that connects the two location nodes $v_i$ and $v_j$, i.e.,  
$(v_i,v_j)\in E_1$. If $(v_i,v_j)\notin E_1$, the entry $a_{i,j}$ of matrix $\textbf{A}$ is set to zero. Since $G_1$ does not have a self-loop that connects a node to itself, entries in the main diagonal of matrix $A$ are all set as zero. Consequently,
\begin{equation}\label{metrixa_eq}
    a_{i,j}=\left\{\begin{matrix}
1, & {\rm if\;}i\neq j{\rm \;and\;} (v_i,v_j)\in E_1,\\ 
0,& {\rm otherwise}.
\end{matrix}\right.
\end{equation}
Note that by replacing the value of $a_{i,j}$ with the corresponding weight of edge $(v_i,v_j)\in V$, one can easily generalize our approach to the scenario where edge weights are not uniform. To illustrate this step of the algorithm, we consider the toy example in Fig.~\ref{systemfigure}, which yields the following sensing matrix:
\begin{equation}
\begin{small}
  \textbf{A}=  \begin{pmatrix}
  0&1  & 0 &0  &0  &0  &0  &0  &0  &0 \\
 1 &0  &1  &0  &0  &1  & 0 &0  &0  &0 \\
 0 &1  &0  &1  &0  & 0 &0  &0  &0  &0 \\
 0 & 0 &1  &0  & 1 & 0 &0  &0  &0  &0 \\
 0 &0  &0  &1  &0  &1  &0  &0  &0  &1 \\
 0 &1  &0  &0  &1  &0  &1  &0  &1  &0\\
 0 & 0 &0  &0  &0  &1  & 0 &1  &0  &0\\
  0& 0 &0  &0  &0  &0  &1  & 0 &1  &0  \\
 0 & 0 &0  &0  &0  &1  & 0 &1  &0  &1\\
  0&0  &0  &0  &1  &0  &0  & 0 &1  &0
\end{pmatrix}.
\end{small}
\end{equation}

 \textbf{Step 2.} Construct the social matrix  $\textbf{B}=(b_{i,j})_{1\leq i,j\leq \varpi}$ which reflects the social relationships among those $m$ users.  Due to technical reasons, matrix $\textbf{B}$ involves not only those $m$ user nodes in the set $V$ but also those $\varpi-m$ non-user nodes in the set $V'$, resulting in a $\varpi\times \varpi$ matrix. All nodes are indexed in the same way as in the sensing matrix \textbf{A}. However, in matrix $\textbf{B}$, only entries corresponding to user nodes in $V$ are set to one, while all entries corresponding to non-user nodes in $V'$ are set to zero. Since a user can naturally access her own PoI collection, entries on the main diagonal of matrix $\textbf{B}$ that correspond to user nodes are set to one. Specifically, the entry $b_{i,j}$ at the $i$-th row and the $j$-column of matrix $B$ follows 
\begin{equation}\label{matrixb_eq}
    b_{i,j}=\left\{\begin{matrix}
1,& \big(v_i,v_j\in V\big) {\rm \;and\;} \big((v_i,v_j)\in  E_2\;{\rm or\;} i=j\big),\\ 
0, & {\rm otherwise\;}.
\end{matrix}\right.
\end{equation}
Note that both $\textbf{A}$ and $\textbf{B}$ are symmetric matrices. Based on this, the social matrix corresponding to the toy example in Fig.~\ref{systemfigure} is constructed as follows:
\begin{equation}
\begin{small}
   \textbf{B}= \begin{pmatrix}
  1&1  & 0 &0  &1  &0  &0  &0  &0  &0 \\
 1 &1  &0  &0  &0  &0  & 0 &0  &0  &0 \\
 0 &0  &1  &1  &0  & 0 &0  &0  &0  &0 \\
 0 &0  &1  &1  &0  & 0 &0  &0  &0  &0 \\
 1 & 0 &0  &0  & 1 & 0 &0  &0  &0  &1 \\
 0 &0  &0  &0  &0  &1  &0  &1  &0  &0 \\
 0 &0  &0  &0  &0  &0  &1  &0  &0  &0\\
 0 & 0 &0  &0  &0  &1  & 0 &1  &0  &0\\
 0 & 0 &0  &0  &0  &0  & 0 &0  &1  &0\\
  0&0  &0  &0  &1  &0  &0  & 0 &0  &1
\end{pmatrix}.
\end{small}
\end{equation}

 \textbf{Step 3.} Compute the matrix product of $\textbf{A}$ and $\textbf{B}$, denoted as $\textbf{C}=(c_{i,j})_{1\leq i,j,\leq m}$, i.e., 
$\textbf{C}\triangleq\textbf{AB}$. For each user node $v_x\in V$, let $\sigma_x(B)$ denote the set summarizing all indices of those zero entries in the $x$-th column of the matrix $\textbf{B}$.  
Construct the minor of matrix $\textbf{A}$ by removing from $\textbf{A}$ the rows and columns indexed by $\sigma_x(\textbf{B})$. This minor is denoted as $\textbf{M}_{\sigma_x(\textbf{B})}$. Intuitively, $\textbf{M}_{\sigma_x(\textbf{B})}$ includes those PoI collections by either the user $v_x$  or $v_x$'s friends. We then compute the matrix product of $\textbf{A}$ and $\textbf{B}$ for the toy example in Fig.~\ref{systemfigure}, yielding:
\begin{equation}
\begin{small}
   \textbf{C}= \begin{pmatrix}
  1&1  & 0 &0  &0  &0  &0  &0  &0  &0 \\
 1 &1  &1  &1  &1  &1  & 0 &1  &0  &0 \\
 1 &1  &1  &1  &0  & 0 &0  &0  &0  &0 \\
 1 &0  &1  &1  &1  & 0 &0  &0  &0  &1 \\
 0 & 0 &1  &1  & 1 & 1 &0  &1  &0  &1 \\
 2 &1  &0  &0  &1  &0  &1  &0  &1  &1 \\
 0 &0  &0  &0  &0  &2  &0  &2  &0  &0\\
 0 & 0 &0  &0  &0  &0  & 1 &0  &1  &0\\
 0 & 0 &0  &0  &1  &2  & 0 &2  &0  &1\\
  1&0  &0  &0  &1  &0  &0  & 0 &1  &1
\end{pmatrix}.
\end{small}
\end{equation}
Next, we consider user $v_6$ (i.e., $x=6$) as an example to illustrate the computation of her initial PoI utility $\phi_6(\varnothing)$ based on our approach. Specifically, we have $\sigma_6(\textbf{B}) = \{1,2,3,4,5,7,9,10\}$. Further, the corresponding minor of matrix $\textbf{A}$ can be derived, by deleting its rows in $\sigma_6(\textbf{B})$ and columns in $\sigma_6(\textbf{B})$, as $\textbf{M}_{\delta_6(\textbf{B})}= \begin{pmatrix}
  0   &0  \\
  0    & 0
\end{pmatrix}$.

 \textbf{Step 4.} Compute 
$\sum\limits_{1\leq j\leq m}c_{j,x}-\frac{1}{2}\textbf{e}^T_x \textbf{M}_{\sigma_x(B)} \textbf{e}_x$, where $\textbf{e}_x$ is a column vector of ones with dimensions suitable for the matrix multiplication.  Finally, based on Step~4, the resulting value for user $v_6$ (i.e., $x = 6$) is determined as follows:
\begin{equation}\label{toy_initialutility}
    \sum\limits_{1\leq j\leq 10}c_{j,6}-\frac{1}{2}\textbf{e}^T_6 \textbf{M}_{\sigma_6(B)} \textbf{e}_6=6.
\end{equation}

Due to the PoI utility function $\phi_i(S_k)$ in (\ref{obj_function_pre}), we can obtain $\phi_i(\varnothing)=|E_1(\{v_x\}\cup N_1(v_x))|$ as in the following lemma.

\begin{lemma}\label{lemma_01}
 For each $v_x\in V$, the following holds: 
 \begin{equation}\label{lemma1_formulation}   \phi_i(\varnothing)=\sum\limits_{1\leq j\leq m}c_{j,x}-\frac{1}{2}\textbf{e}_x^T \textbf{M}_{\sigma_x(B)} \textbf{e}_x,
 \end{equation}
 where $\textbf{e}_x$ is an all-one column vector.
\end{lemma}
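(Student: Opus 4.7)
The plan is to verify the identity (\ref{lemma1_formulation}) by giving a combinatorial interpretation of each of the two terms on the right-hand side and matching them against $\phi_i(\varnothing)=|E_1(\{v_x\}\cup N_2(v_x))|$. Let me abbreviate $T_x\triangleq \{x\}\cup \{l:(v_l,v_x)\in E_2\}$, which is exactly the index set of those entries that are nonzero in the $x$-th column of $\mathbf{B}$. Equivalently, $T_x$ is the complement of $\sigma_x(B)$ in $\{1,\ldots,m\}$.

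First I would unpack the matrix product. By construction, $c_{j,x}=\sum_{l=1}^m a_{j,l}b_{l,x}=\sum_{l\in T_x}a_{j,l}$, since $b_{l,x}=1$ precisely when $l\in T_x$. So $c_{j,x}$ counts the edges of $G_1$ that are incident to $v_j$ and whose other endpoint lies in $T_x$. Summing over $j$ and swapping the order of summation yields
\begin{equation}
\sum_{j=1}^m c_{j,x}=\sum_{l\in T_x}\sum_{j=1}^m a_{j,l}=\sum_{l\in T_x}\deg_{G_1}(v_l),
\end{equation}
where the last equality uses that row/column sums of the adjacency matrix $\mathbf{A}$ give $G_1$-degrees. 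Next I would appeal to the handshake-type double counting for the edge set $E_1(T_x)$: each edge of $G_1$ with exactly one endpoint in $T_x$ is counted once in the degree sum, while each edge with both endpoints in $T_x$ is counted twice. Hence
\begin{equation}
\sum_{l\in T_x}\deg_{G_1}(v_l)=|E_1(T_x)|+|E_1[T_x]|,
\end{equation}
where $E_1[T_x]$ denotes the edges of $G_1$ whose two endpoints both lie in $T_x$.

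It remains to identify the correction term $\frac{1}{2}\mathbf{e}_x^T\mathbf{M}_{\sigma_x(B)}\mathbf{e}_x$ with $|E_1[T_x]|$. By Step 3 of Algorithm~\ref{compputing_phi_varnothing}, $\mathbf{M}_{\sigma_x(B)}$ is obtained from $\mathbf{A}$ by deleting exactly the rows and columns indexed by $\sigma_x(B)$, hence it is the adjacency matrix of the subgraph of $G_1$ induced on $T_x$. Since $\mathbf{e}_x$ is the all-ones vector of length $|T_x|$, the quadratic form $\mathbf{e}_x^T\mathbf{M}_{\sigma_x(B)}\mathbf{e}_x$ sums every entry of this submatrix, and by symmetry of $\mathbf{A}$ each induced edge contributes $2$ to the sum. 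Therefore $\tfrac{1}{2}\mathbf{e}_x^T\mathbf{M}_{\sigma_x(B)}\mathbf{e}_x=|E_1[T_x]|$, and subtracting this from $\sum_j c_{j,x}$ gives precisely $|E_1(T_x)|=|E_1(\{v_x\}\cup N_2(v_x))|=\phi_x(\varnothing)$.

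I do not expect serious obstacles here: the work is essentially bookkeeping. The one place that deserves a careful sentence is the identification of the index set retained by $\mathbf{M}_{\sigma_x(B)}$ with $T_x$, since it involves a complementation between the rows/columns removed and those kept; if this correspondence is stated cleanly at the outset, the rest of the proof reduces to a single application of double counting together with the symmetry of $\mathbf{A}$. A minor point I would flag explicitly is that putting ones on the diagonal of $\mathbf{B}$ (rather than zeros) is exactly what ensures $x\in T_x$, so that the node $v_x$'s own incident edges $E_1(v_x)$ are automatically included in the count.
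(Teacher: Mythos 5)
Your proof is correct and follows essentially the same route as the paper's: both expand $\sum_j c_{j,x}$ into a degree sum over $\{v_x\}\cup N_2(v_x)$ via the structure of $\mathbf{B}$, observe that edges induced inside this set are double-counted, and identify $\tfrac{1}{2}\mathbf{e}_x^T\mathbf{M}_{\sigma_x(B)}\mathbf{e}_x$ as exactly that induced-edge count using the symmetry of $\mathbf{A}$. Your phrasing via the explicit decomposition $\sum_{l\in T_x}\deg_{G_1}(v_l)=|E_1(T_x)|+|E_1[T_x]|$ is a slightly cleaner statement of the paper's two supporting propositions, but the argument is the same.
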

For instance, when no user has been selected, it is evident from Fig.~\ref{systemfigure} that user $v_6$ in the toy example can access four edges from her own collection, namely $\{e_{26}, e_{56}, e_{67}, e_{69}\}$ in $G_1$, as well as two additional edges from her sole friend $v_8$’s collection $\{e_{78}, e_{89}\}$ in $G_1$. Therefore, the initial PoI utility of user $v_6$ is $\phi_{6}(\varnothing) = 6$, which matches the result returned by our approach above in (\ref{toy_initialutility}).

By applying Lemma \ref{lemma_01} to the welfare in (\ref{obj_function}), we obtain the $\Phi(\varnothing)$ in closed form as presented in the following proposition.
\begin{proposition}\label{theorem_02}
When no user is selected yet for GUS in Algorithm \ref{greedyalgforfiniteidea1}, we have $\Phi(\varnothing)=\frac{1}{m}\textbf{e}^T\textbf{C}\textbf{e}-\frac{1}{2m}\sum\limits_{v_x\in V}\textbf{e}^T_x \textbf{M}_{\sigma_x(B)} \textbf{e}_x$,
in which $\textbf{e}$ is an all-one column vector.
\end{proposition}

Given the set $S_{i-1}$ of the first $i-1$ selected users, we now introduce our approach for dynamically computing $\Phi(S_i)$. Suppose the $i$th selected user is node $v_h$, i.e., $s_i=v_h$.
 
  \textbf{Step 1}
($\textbf{B}$ \textit{matrix update}). $\textbf{B}_i$ is updated from $\textbf{B}_{i-1}$ (with $\textbf{B}_0=\textbf{B}$) in the following manner: all the entries on the $h$-th row of matrix $\textbf{B}_{i-1}$ are updated to 1, indicating that the PoI information collected by $v_h$ will be broadcast to $v_i$ once $v_h$ is selected. Meanwhile, the entries on the $h$th column of matrix $\textbf{B}_{i-1}$ remain unchanged, reflecting the fact that  $v_h$ will not increase her own PoI information from being selected. Consequently, $\textbf{B}_i$ may not be symmetric anymore. Based on the matrix $\textbf{B}_i$, the matrix $\textbf{C}_i$ is further updated from $\textbf{C}_{i-1}$ (with $\textbf{C}_0=\textbf{C}$) as follows.

  \textbf{Step 2}$ (\textbf{C}$ \textit{matrix Update}). Given the set $S_i$ of the first $i$ selected users, matrix $\textbf{C}$ can be updated now to $\textbf{C}_i$ by $\textbf{C}_i=\textbf{AB}_i$.
Accordingly, $\textbf{M}_{\sigma_x(\textbf{B})}$ is updated to $\textbf{M}_{\sigma_x(\textbf{B})_i}$.

With updated $\textbf{C}_i$ and $\textbf{M}_{\sigma_x(\textbf{B})_i}$,
the closed-form expression for $\Phi(S_i)$ is derived in the following theorem.
\begin{theorem}\label{thm4.3_for_computing_Phi}
Given the set $S_i$ of selected users, we have
\begin{equation}\label{formulation_phi_s_i_computing}
    \Phi(S_i)=\frac{1}{m}\textbf{e}^T(\textbf{C}_i)\textbf{e}-\frac{1}{2m}\sum\limits_{v_x\in V}\textbf{e}^T_x \textbf{M}_{\sigma_x(\textbf{B}_i)} \textbf{e}_x,
\end{equation}
in which $\textbf{e}$ and $\textbf{e}_x$ are both all-ones column vectors.
\end{theorem}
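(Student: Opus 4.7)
The plan is to mirror the proof of Proposition~\ref{theorem_02}, establishing first a per-user analog of Lemma~\ref{lemma_01} that accommodates the $\textbf{B}$- and $\textbf{C}$-Matrix-Update steps, and then aggregating via the objective (\ref{obj_function}). Concretely, I would first prove the pointwise identity
\begin{equation*}
\phi_x(S_i)\;=\;\sum_{1\leq j\leq m}(C_i)_{j,x}\;-\;\tfrac{1}{2}\,\textbf{e}_x^T\textbf{M}_{\sigma_x(B_i)}\textbf{e}_x
\end{equation*}
for every $v_x\in V$, and the theorem will then follow by summing over $v_x$, dividing by $m$, and using $\sum_{v_x\in V}\sum_{j}(C_i)_{j,x}=\textbf{e}^T\textbf{C}_i\textbf{e}$.

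The key steps are interpretive and reduce to double-counting on $G_1$. First, I would verify that the $\textbf{B}$-Matrix-Update step has the precise effect that $(B_i)_{k,x}=1$ if and only if $v_k\in\{v_x\}\cup N_2(v_x)\cup S_i$: for each selected $s_j=v_h$, rewriting row $h$ to all ones forces $(B_i)_{h,x}=1$ in every column $x$, encoding that $v_h$'s broadcast reaches $v_x$; all remaining entries keep their original social-graph meaning. Second, unwinding $\textbf{C}_i=\textbf{A}\textbf{B}_i$ gives $(C_i)_{j,x}=\sum_{k}a_{j,k}(B_i)_{k,x}$, which counts edges $(v_j,v_k)\in E_1$ with $v_k\in\{v_x\}\cup N_2(v_x)\cup S_i$; hence $\sum_{j}(C_i)_{j,x}$ enumerates the elements of $E_1(\{v_x\}\cup N_2(v_x)\cup S_i)$ \emph{with multiplicity}, where each edge whose two endpoints both lie in this set is counted exactly twice. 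Third, by the same bookkeeping used in Lemma~\ref{lemma_01}, $\textbf{M}_{\sigma_x(B_i)}$ is the principal submatrix of the symmetric $\textbf{A}$ whose rows/columns are indexed by $\{v_x\}\cup N_2(v_x)\cup S_i$, so $\textbf{e}_x^T\textbf{M}_{\sigma_x(B_i)}\textbf{e}_x$ is exactly twice the number of edges of $E_1$ internal to that set. Subtracting half of this from the multiplicity-sum removes the double-counted internal edges and leaves $|E_1(\{v_x\}\cup N_2(v_x)\cup S_i)|=\phi_x(S_i)$, as defined in (\ref{obj_function_pre}).

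The main obstacle I anticipate is a subtlety introduced by the asymmetry of $\textbf{B}_i$: because only rows (and not columns) are rewritten for selected users, one must confirm that the column-based definition of $\sigma_x(B_i)$ still captures the intended support set $\{v_x\}\cup N_2(v_x)\cup S_i$. The check is local: for every selected $s_j=v_h$, the row update touches column $x$ precisely at the single entry $(h,x)$, which is exactly the entry needed to place $v_h$ into that support set; other columns in row $h$ become $1$ too, but they are only consulted when forming $\phi_{x'}(S_i)$ for the corresponding $v_{x'}$. Once this is verified, $\textbf{M}_{\sigma_x(B_i)}$ remains a principal (hence symmetric) submatrix of $\textbf{A}$, and the double-counting argument of Proposition~\ref{theorem_02} transfers verbatim with $\textbf{B}$ replaced by $\textbf{B}_i$ and $\textbf{C}$ by $\textbf{C}_i$, yielding (\ref{formulation_phi_s_i_computing}).
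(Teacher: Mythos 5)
Your proposal is correct and follows exactly the route the paper intends: the paper gives no explicit proof of this theorem, merely asserting it is ``similar to Proposition~\ref{theorem_02},'' and your argument is the natural extension of Lemma~\ref{lemma_01} and Proposition~\ref{theorem_02} with $\textbf{B}$ replaced by $\textbf{B}_i$, including the correct verification that the row-only update makes column $x$ of $\textbf{B}_i$ indicate exactly $\{v_x\}\cup N_2(v_x)\cup S_i$. Your handling of the asymmetry of $\textbf{B}_i$ is a detail the paper glosses over, and you resolve it correctly.
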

\subsubsection{Static Crowd-sensing with PoI Preference}
We now adapt our tractable approaches above to the static crowd-sensing scenario with PoI preference, where each user $v_i\in V$ has her own subset $E_1^{(i)}$ of edges of interest. To compute $\phi_i$ in (\ref{obj_function_pre_restricted}) for each user $v_i$, we further update the sensing matrix $\textbf{A}$ to matrix $\textbf{A}^{i}=(a^i_{x,y})_{1\leq x,y\leq \varpi}$, which only includes those edges/roads that are of user $v_i$'s own interest. Now, the entry $(a^i_{x,y})$ at the $x$-th row and the $y$-th column of matrix $\textbf{A}^{i}$ follows: 
\begin{equation}
a^i_{x,y}=\left\{\begin{matrix}
1, & {\rm if\;}x\neq y{\rm \;and\;} (v_x,v_y)\in E_1^{(i)},\\ 
0,& {\rm otherwise}.
\end{matrix}\right.
\end{equation}
By applying a similar matrix-update approach as described above for the static crowd-sensing scenario without PoI preference, we can compute each $\phi_i(S_i)$ and further each $\Phi_i(S_i)$ in the case with PoI preference.

With the above subroutine approaches for computing each  $\Phi(S_i)$ in place, our Algorithm~\ref{greedyalgforfiniteidea1} now is ready to select each hotspot user as specified in its Step~2. 
In the following subsection, we will discuss the approximation guarantees of our Algorithm~\ref{greedyalgforfiniteidea1}.
\subsection{Decent Approximation Guarantee for Algorithm~\ref{greedyalgforfiniteidea1}}\label{sec_approximation_result}

  As a critical step in establishing our approximation results, we first disclose in the following lemma that our objective function $\Phi(V)$ in (\ref{obj_function}) meets some desirable properties, e.g., monotonicity and submodularity. A set function $f(\cdot)$, defined over subsets of a given finite set $V$, is called submodular if  $f(S\cup\{v\})-f(S)\geq f(T\cup\{v\})-f(T)$
holds for all elements $v\in V$ and pairs of sets $S\subseteq V, T\subseteq V$ with $S\subseteq T$. 

\begin{lemma}\label{lemma_02}
Given any sensing graph $G_1=(V\cup V', E_1)$ and social graph $G_2(V,E_2)$, the function
$\Phi(S)$  in (\ref{obj_function}) is non-decreasing and submodular in its input $S\subseteq V$.
\end{lemma}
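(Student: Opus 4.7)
The plan is to reduce the claim to a statement about each individual utility $\phi_i(S)$ and then invoke the fact that a non-negative linear combination of monotone submodular set functions is itself monotone submodular. Concretely, I will show that every $\phi_i : 2^V \to \mathbb{Z}_{\geq 0}$ is monotone non-decreasing and submodular; since $\Phi(S) = \tfrac{1}{m}\sum_{v_i \in V}\phi_i(S)$ is a uniform average, the property is preserved.

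For monotonicity, I will observe that for any $S \subseteq T \subseteq V$, the node set argument satisfies $\{v_i\}\cup N_2(v_i)\cup S \subseteq \{v_i\}\cup N_2(v_i)\cup T$. Because the operator $E_1(\cdot) = \bigcup_{v\in\cdot}E_1(v)$ is itself monotone under set inclusion, this gives $E_1(\{v_i\}\cup N_2(v_i)\cup S)\subseteq E_1(\{v_i\}\cup N_2(v_i)\cup T)$, and intersecting with the preference set $E_1^{(i)}$ (or taking it to be $E_1$ in the no-preference case) preserves the inclusion. Taking cardinalities yields $\phi_i(S)\leq \phi_i(T)$.

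For submodularity, the key idea is to recognize $\phi_i$ as a coverage function on the ground set $E_1^{(i)}$ where each candidate user $v\in V$ is identified with the ``set'' $E_1^{(i)}\cap E_1(v)$, together with a fixed pre-covered portion $E_1^{(i)}\cap E_1(\{v_i\}\cup N_2(v_i))$. Writing $U_i(S)\triangleq E_1^{(i)}\cap E_1(\{v_i\}\cup N_2(v_i)\cup S)$, for any $S\subseteq T\subseteq V$ and $v\in V\setminus T$ the marginal becomes
\begin{equation*}
\phi_i(S\cup\{v\})-\phi_i(S)\;=\;\bigl|\bigl(E_1^{(i)}\cap E_1(v)\bigr)\setminus U_i(S)\bigr|.
\end{equation*}
Since monotonicity already yields $U_i(S)\subseteq U_i(T)$, we have $(E_1^{(i)}\cap E_1(v))\setminus U_i(T)\subseteq (E_1^{(i)}\cap E_1(v))\setminus U_i(S)$, and taking cardinalities gives the diminishing-returns inequality $\phi_i(T\cup\{v\})-\phi_i(T)\leq \phi_i(S\cup\{v\})-\phi_i(S)$. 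Averaging over $v_i\in V$ with positive weights $\tfrac{1}{m}$ preserves both monotonicity and submodularity, yielding the lemma.

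I do not expect a serious obstacle here: the argument is entirely set-theoretic bookkeeping on edge subsets, and the matrix machinery developed in Section~\ref{subsection_computing_phi} is not needed for the proof. The only point requiring a little care is to make sure that the preference restriction $E_1^{(i)}$ commutes correctly with the set operations in the marginal computation, which it does because intersection with a fixed set distributes over union and set difference. Once each $\phi_i$ is handled, lifting to $\Phi$ is a one-line invocation of closure of monotone submodular functions under non-negative linear combinations.
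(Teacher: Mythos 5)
Your proposal is correct and follows essentially the same route as the paper: establish monotonicity and submodularity of each individual $\phi_i$ by viewing it as a coverage function over edge subsets, then lift to $\Phi$ via the non-negative average. Your explicit marginal formula $\phi_i(S\cup\{v\})-\phi_i(S)=\bigl|\bigl(E_1^{(i)}\cap E_1(v)\bigr)\setminus U_i(S)\bigr|$ is in fact a cleaner rendering of the paper's own computation, which reaches the same diminishing-returns inequality through the inclusion--exclusion identity on $|E_1(\cdot)|$.
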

The following lemma provides another useful component for further establishing our approximation results. 
\begin{lemma}
\label{submodular_supportlemma}
For maximization problems that select $k$ elements one at a time, if the objective function $f(\cdot)$ on subsets of a finite set $V$, is submodular, non-decreasing, and $f(\varnothing)=0$, we can obtain a solution with performance value at least $1-(\frac{k-1}{k})^k$ fraction of the optimal value, which is done by selecting an element that provides the largest marginal increase in the function value.
\end{lemma}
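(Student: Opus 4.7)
The plan is to adapt the classical argument of Nemhauser--Wolsey--Fisher, specializing it to our greedy-select-one-at-a-time routine. Let $S^*=\{o_1,\ldots,o_k\}$ denote an optimal solution and $S_i=\{s_1,\ldots,s_i\}$ the set produced by greedy after $i$ rounds, with $S_0=\varnothing$. Write $\text{OPT}=f(S^*)$. The goal is to establish by induction on $i$ the inequality
\begin{equation*}
f(S_i)\;\geq\;\Bigl[1-\Bigl(\tfrac{k-1}{k}\Bigr)^{i}\Bigr]\,\text{OPT},
\end{equation*}
which yields the claimed ratio at $i=k$.

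First I would derive the one-step recurrence. Using monotonicity of $f$ (so $\text{OPT}\le f(S^*\cup S_i)$) and telescoping the union $S^*\cup S_i$ element by element, I apply submodularity to each telescoped term to peel off a single element against the base $S_i$. This gives
\begin{equation*}
\text{OPT}\;\le\;f(S_i)+\sum_{j=1}^{k}\bigl[f(S_i\cup\{o_j\})-f(S_i)\bigr].
\end{equation*}
The greedy rule (step~2 of Algorithm~\ref{greedyalgforfiniteidea1}, which picks the element of largest marginal gain) then ensures that each marginal $f(S_i\cup\{o_j\})-f(S_i)$ is dominated by $f(S_{i+1})-f(S_i)$, so $\text{OPT}\le f(S_i)+k\bigl[f(S_{i+1})-f(S_i)\bigr]$. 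Rearranging produces the clean recurrence
\begin{equation*}
f(S_{i+1})\;\geq\;\tfrac{1}{k}\,\text{OPT}+\bigl(1-\tfrac{1}{k}\bigr)f(S_i).
\end{equation*}

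Next I would run the induction. The base case $i=0$ holds because $f(\varnothing)=0$ and $1-(1-1/k)^{0}=0$. For the inductive step, plug the hypothesis into the recurrence:
\begin{equation*}
f(S_{i+1})\ge\tfrac{1}{k}\text{OPT}+\bigl(1-\tfrac{1}{k}\bigr)\bigl[1-(1-\tfrac{1}{k})^{i}\bigr]\text{OPT}=\bigl[1-(1-\tfrac{1}{k})^{i+1}\bigr]\text{OPT},
\end{equation*}
after the two $\tfrac{1}{k}$ terms cancel. Setting $i=k$ completes the argument.

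The only delicate point I anticipate is the telescoping inequality: one must carefully enumerate $S^*=\{o_1,\ldots,o_k\}$ in an arbitrary fixed order and invoke submodularity to replace $f(S_i\cup\{o_1,\ldots,o_j\})-f(S_i\cup\{o_1,\ldots,o_{j-1}\})$ by $f(S_i\cup\{o_j\})-f(S_i)$ for each $j$. All other steps are algebraic, and the hypotheses $f(\varnothing)=0$, monotonicity, and submodularity enter precisely at the three labeled places above. Notably, Lemma~\ref{lemma_02} has already verified that our objective $\Phi(\cdot)$ satisfies these hypotheses, so this lemma will apply directly when proving the main approximation guarantee for Algorithm~\ref{greedyalgforfiniteidea1}.
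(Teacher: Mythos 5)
Your proof is correct and is precisely the classical Nemhauser--Wolsey--Fisher argument that the paper invokes by citation \cite{nemhauser1978analysis} without reproducing: the telescoping bound $\mathrm{OPT}\le f(S_i)+\sum_{j}[f(S_i\cup\{o_j\})-f(S_i)]$, the greedy recurrence $f(S_{i+1})\ge \frac{1}{k}\mathrm{OPT}+(1-\frac{1}{k})f(S_i)$, and the induction are all sound. One small caveat on your closing remark: $\Phi(\varnothing)\neq 0$ in general, so the paper applies this lemma not to $\Phi$ directly but to the shifted function $\widetilde{\Phi}(S)=\Phi(S)-\Phi(\varnothing)$ in the proof of Theorem~\ref{ub_OBJ1F}.
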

Leveraging Lemmas \ref{lemma_02} and \ref{submodular_supportlemma}, we derive the approximation guarantee for our Algorithm~\ref{greedyalgforfiniteidea1} in the following theorem.
\begin{theorem}\label{ub_OBJ1F}
For the static crowd-sensing problem in (\ref{initial_objective_globally}) with and without PoI preference, Algorithm~\ref{greedyalgforfiniteidea1} with its subroutines approaches discussed in Section~\ref{subsection_computing_phi} guarantees an approximation $1-\frac{m-2}{m}(\frac{k-1}{k})^k>0.632$ of the optimum. It 
runs in polynomial-time with complexity $O(k(m-k)m^{2.2372})$.
\end{theorem}

Note that the approximation guarantee above in Theorem \ref{ub_OBJ1F} exhibits a monotonically decreasing trend as $k$ or $m$ increases. 

%
%

%

\section{Augmentation-Adaptive Algorithms for Mobile Crowd-sensing}\label{sec_resourceaugmented}
  In this section, we simplify the routing complexities of the mobile crowd-sensing scenario by transforming the problem into an optimization alternative. This problem transformation enables us to analyze the problem using the framework established for our fundamental scenario in Section~\ref{sec_fundamentalalg}. 
\subsection{Problem Transformation to Fit Section III's Framework}
  Recall that the path $p_j$ recommended to each selected user $s_j\in S_k$ should meet the following criteria:
\begin{itemize}
    \item Path $p_j$ starts at the location of the user node $s_j$.
    \item The length of path $P_j$ is $n$, i.e.,  $|p_j|=n$.
\end{itemize}
In light of the above two criteria for eligible path recommendations, we can transform our mobile crowd-sensing problem into the new Problem (\ref{new_obj_nhop})-(\ref{new_obj_nhop_constraint3}) through the following steps:
\begin{itemize}
    \item For each user $v_i\in V$, we apply the depth-first-search (DFS) technique \cite{cormen2022introduction} on the sensing graph $G_1$ with $v_i\in V$ being the root, resulting in a DFS tree with its root at $v_i$. Accordingly, we can find all those $n$-hop paths starting from $v_i$ by identifying those $n$-depth edges in the above DFS tree, which runs in $O(|V'|+m+|E_1|)$-time since $|E_1|\leq (|V'|+m)^2$.
By repeating this DFS process for each location node in $V$, we can find all paths that start from a location node and have a length $n$ overall in $O(m(|V'|+m)^2)$-time, which are summarized in the set $\mathcal{P}$.  This path set $\mathcal{P}$ will constitute the search space of the problem transformed.
    \item Find a subset $P\subseteq \mathcal{P}$ of  $k$ paths from $k$ distinct start nodes, such that the objective $\frac{1}{m}\sum\limits_{v_i\in V}|E_1(\{v_i\}\cup N_2(v_i)\cup N_1(P))|$ in  (\ref{obj_function}) is maximized. 
    \item In this way, our problem is equivalently simplified from (\ref{initial_objective_globally}) to the following Problem (\ref{new_obj_nhop})-(\ref{new_obj_nhop_constraint3}):  
\end{itemize}
\begin{eqnarray}
\max\limits_{P\subseteq\mathcal{P}} &\frac{1}{m}\sum\limits_{v_i\in V}|E_1(\{v_i\}\cup N_2(v_i)\cup N_1(P))| \label{new_obj_nhop}\\
{\rm s.t.}& \mathcal{P}\triangleq\{p\subseteq E_1\big | p {\rm \; is\;from\;a\;user\;and}\;|p|=n\},\label{new_obj_nhop_constraint1}\\
%
{}& |P|=k. \label{new_obj_nhop_constraint3}
\end{eqnarray}

\subsection{Augmentation-adaptive Algorithms with Provable Performance Guarantees} 
  Notice that the search space $\mathcal{P}$ of the above Problem (\ref{new_obj_nhop})-(\ref{new_obj_nhop_constraint3}) may include multiple
paths that start from the same user node. This
motivates us to further consider the following question: \textit{can one achieve more welfare efficiency when a user node has multiple users instead}? 

Inspired by \cite{anand2012resource,caragiannis2022truthful,kalyanasundaram2000speed}, we look at an advanced resource-augmented version of the problem, where the high-level insight is that: when the algorithm with an acceptable augmentation outperforms significantly the original solution, it makes sense to invest to invite more users to join at each sensing node.

To begin with, we introduce a new resource augmentation scheme for our problem: given an input instance $I=(G_1,G_2)$ of the two graphs in the mobile crowd-sensing scenario, we create an augmented instance $I_g$ to allow each user node of $G_1$ to have $g\geq 1$ users. Considering the budget $k$ for user selection, a reasonable augmentation factor will also adhere to $g\leq k$. Accordingly, the approximation ratio with augmentation factor $g$ tells the worst-case ratio of an approximation algorithm's welfare on $I_g$ over an optimal solution's welfare on the original $I$, which is now updated from (\ref{def_approratio}) to
\begin{equation*}
    \rho=\inf\limits_{I=(G_1,G_2)}\frac{\Phi(P_k(I_g))}{\Phi(P_k^*(I))}.
\end{equation*}

Next, we propose a new augmentation-adaptive approach to fit any $g$ in Algorithm~\ref{greedyalgforfiniteidea1_augmented}, named Greedy Path Selection (GPS). In general, the GPS first constructs the search space $\mathcal{P}$, and then 
dynamically selects a path that brings the largest marginal welfare, subject to the constraint that the number of paths originating from the same user node does not exceed the augmentation factor $g$. The social matrix now is updated from (\ref{matrixb_eq}) to the $\varpi\times \varpi$ matrix $\textbf{B}^i=(b^i_{x,y})_{1\leq x,y\leq \varpi}$, where
$b^i_{x,y}=\left\{\begin{matrix}
1, & {\rm if\;}x= y{\rm \;or\;} (v_x,v_y)\in  E_2,\\ 
0,& {\rm otherwise}.
\end{matrix}\right.$
\begin{algorithm}[t]
\caption{\textsc{Greedy-Path-Selection} (GPS)}
\label{greedyalgforfiniteidea1_augmented}
\textbf{Input}:  $G_1=(V,E_1)$, $G_2=(V,E_2)$, $n$, $k$ and $g$, $P\leftarrow\varnothing$.\\
\textbf{Output}: $P=\{p_1,...,p_k\}$.
\begin{algorithmic}[1] 
\STATE {\color{black}For each user node $v_i\in V$, compute a tree by applying the depth-first-search (DFS) technique \cite{cormen2022introduction} on the sensing graph $G_1$ with $v_i$ being the root. }
\STATE {\color{black}Construct $\mathcal{P}$ by including all $n$-depth edges of those DFS trees obtained above.}
\FOR{$i\in [k]$}
    \STATE Select $p_i\leftarrow\arg\max\limits_{p\in \mathcal{P}}\left\{{\Phi}(P\cup\{p_i\})-{\Phi}(P)\right\}$.
    \STATE $P\leftarrow P\cup\{p_i\}$.
    \STATE Find in $P$ all those paths that start with the same user node as $p_i$, included in set $P_i$.
    \IF{$|P_i|=g$}
{ \STATE Find in $\mathcal{P}$ all those paths that start with the same user node as $p_i$, included in set $\mathcal{P}_i$.}
        \STATE $\mathcal{P}\leftarrow \mathcal{P}-\mathcal{P}_i$.
    \ENDIF
\ENDFOR
\end{algorithmic}
\end{algorithm}


\begin{figure*}[t]
\centering
\begin{minipage}[t]{0.3\textwidth}
\centering
\includegraphics[width=5.6cm]{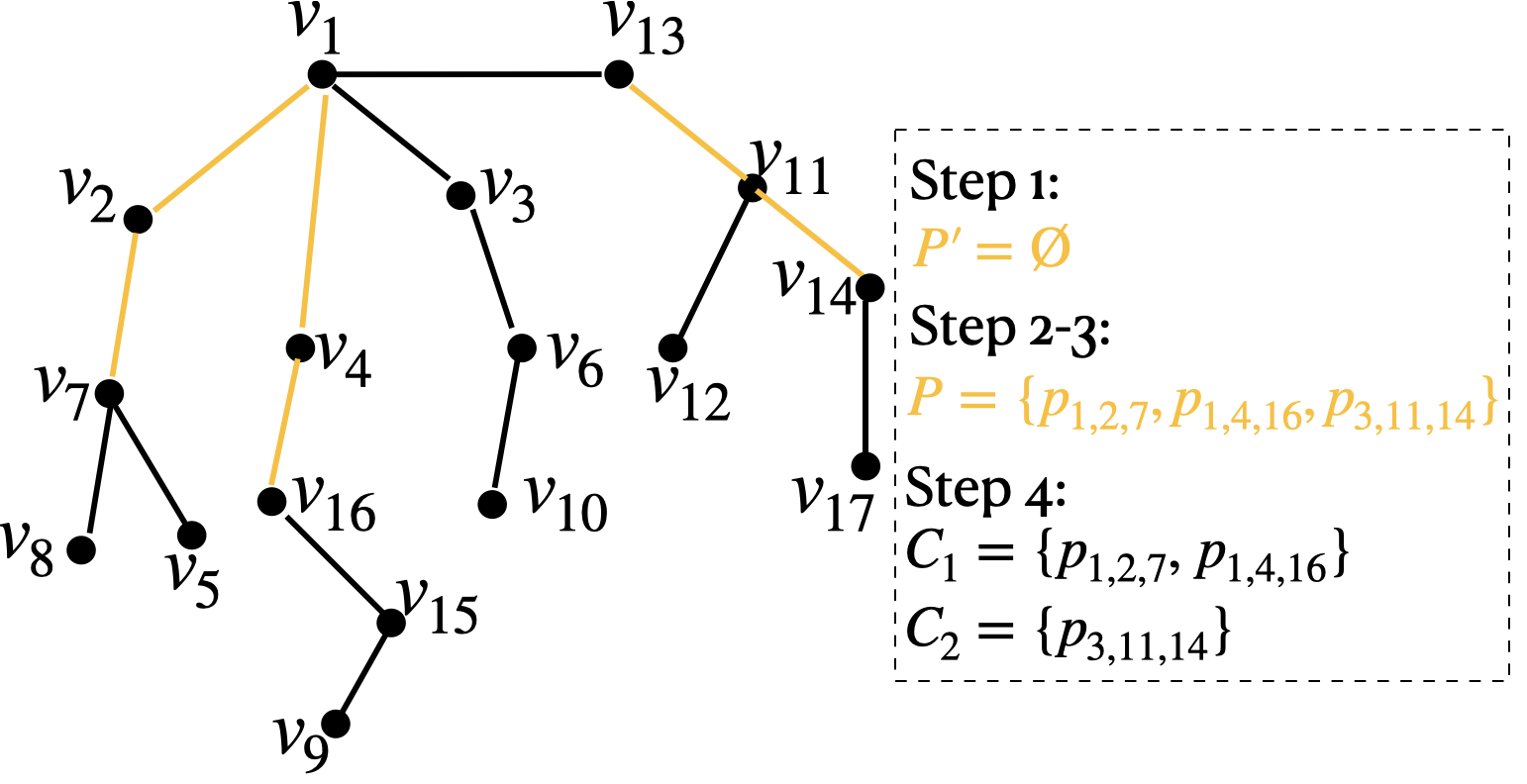}
\end{minipage}
\begin{minipage}[t]{0.30\textwidth}
\centering
\includegraphics[width=5.5cm]{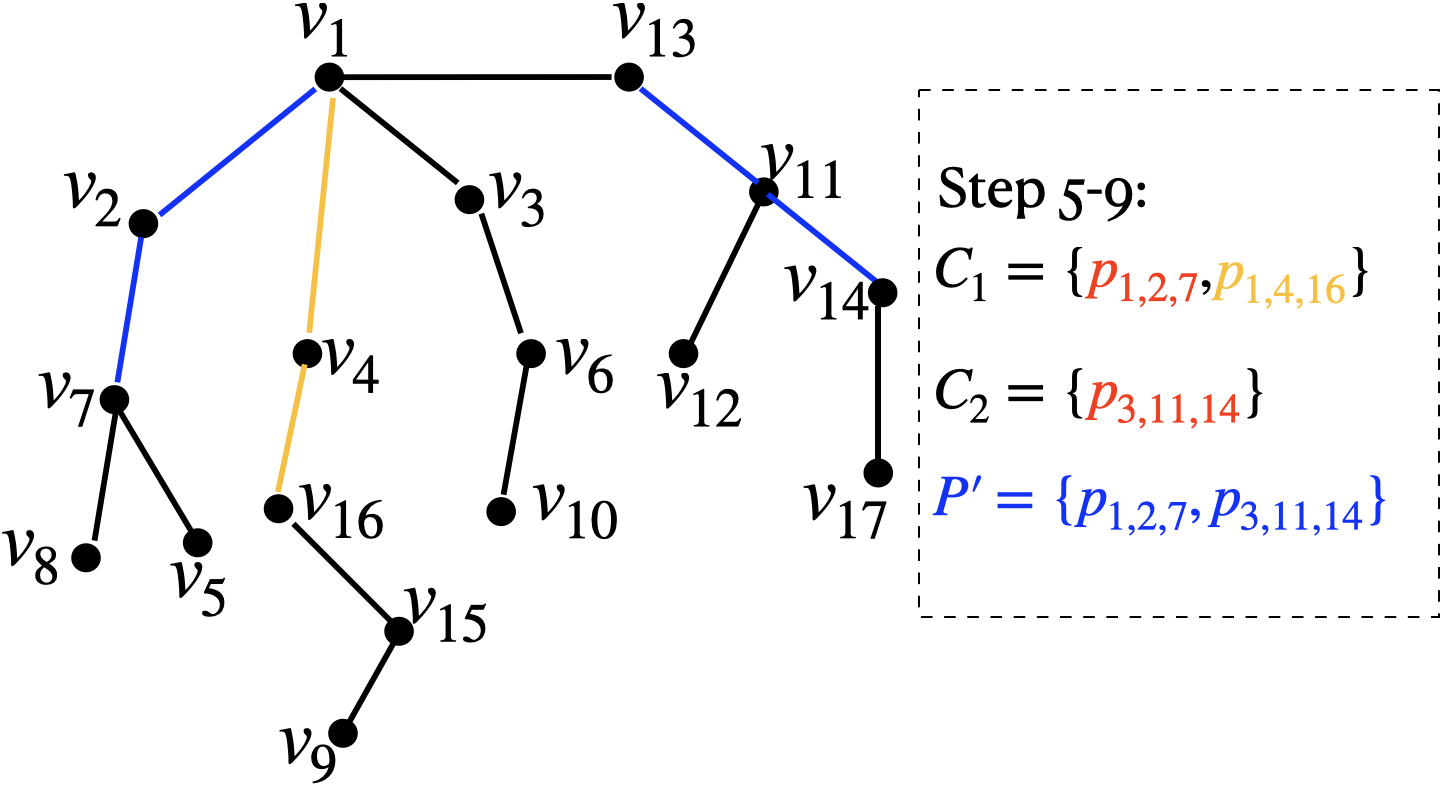}
\end{minipage}
\begin{minipage}[t]{0.3\textwidth}
\centering
\includegraphics[width=6cm]{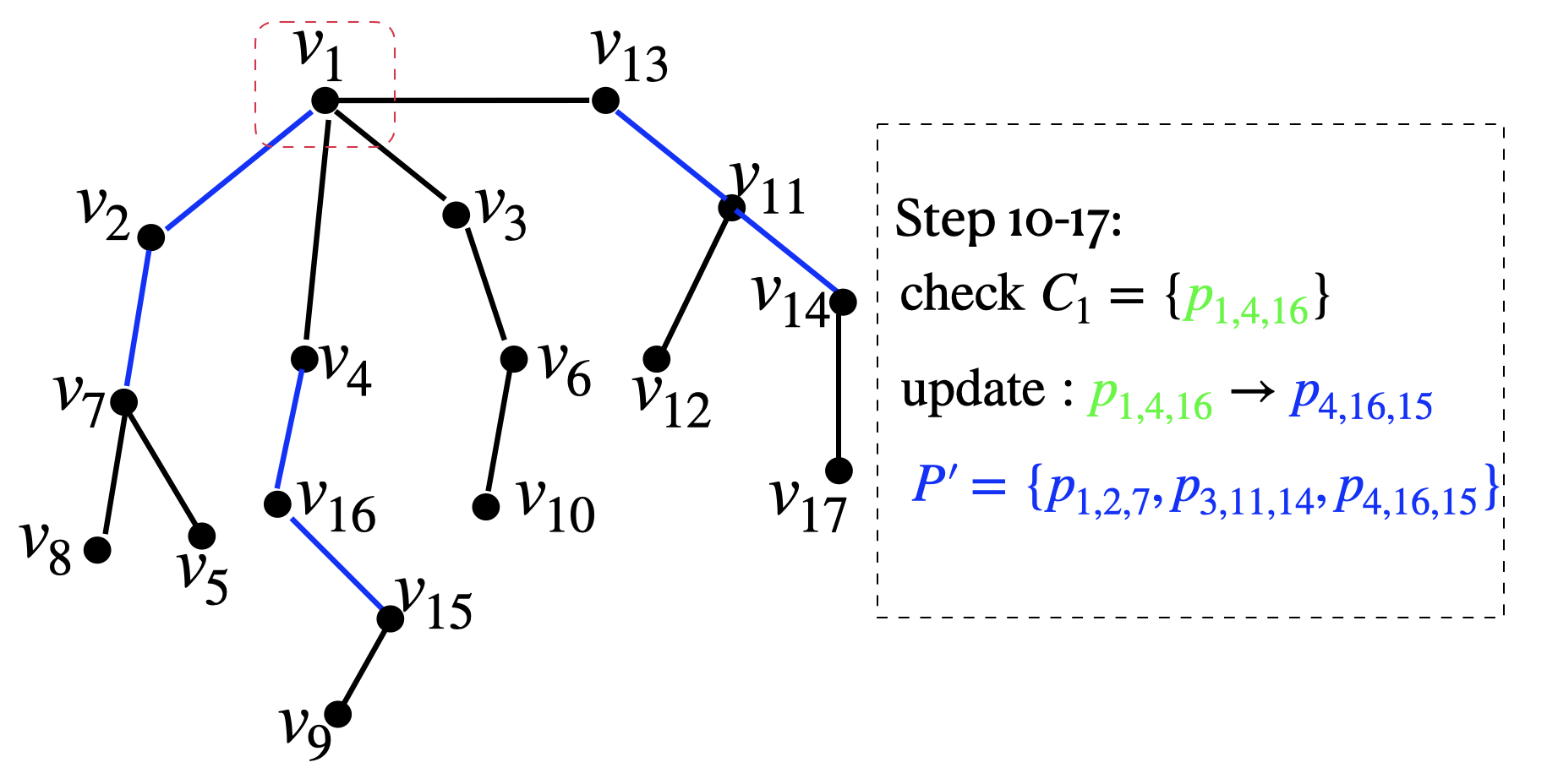}
\end{minipage}
\caption{Illustrating example for Algorithm~\ref{adjusted_greedyhireg_alg} with $k=3$ selected users and $n=2$ hops, where each $p_{xyz}$ denotes, for short, the path $(v_x,v_y,v_z)$.}
\label{figexecutionalg}
\end{figure*}

Since each selected user $v_i$ will collect more PoI information from roads that are incident to the path $p_i$ now, we update in matrix $\textbf{B}_{j-1}$ those row entries (corresponding to nodes in $p_i$) to one. The following theorem summarizes our results for the mobile crowd-sensing scenario.  %
\begin{theorem}\label{theorem_n_hop_forward}
    For the mobile crowd-sensing problem (\ref{new_obj_nhop})-(\ref{new_obj_nhop_constraint3}), Algorithm~\ref{greedyalgforfiniteidea1_augmented} (GPS) with augmentation factor $g\in\{1,...,k\}$ guarantees an approximation ratio of at least $\frac{g}{k}[1-\frac{\varpi-2}{\varpi}(\frac{k-1}{k})^k]$ of the optimum,
    which runs in $O(k(\tau-k)\varpi^{2.372})$-time, where $\tau=|\mathcal{P}|$ denotes the number of all possible $n$-hop paths in $G_1$. 
\end{theorem}


Theorem~\ref{theorem_n_hop_forward} demonstrates that the runtime of Algorithm~\ref{greedyalgforfiniteidea1_augmented} is polynomial in budget $k$ and the node size $\varpi$. As LBSNs typically restrict the number of hops $n$ for maintaining information freshness, Algorithm~\ref{greedyalgforfiniteidea1_augmented} still achieves high time efficiency even in extreme cases where $G_1$ is a complete graph and the factor $\tau\triangleq|\mathcal{P}|$ approaches $\varpi^n$. 

As the augmentation $g$ increase from $1$ to $k$, Theorem~\ref{theorem_n_hop_forward} reveals that the approximation guarantee of Algorithm~\ref{greedyalgforfiniteidea1_augmented} significantly improves by a factor of $k$, increasing from $\frac{1}{k}(1-\frac{1}{e})$ to $1-\frac{1}{e}> 0.632$. This validates the motivation to augment resources by inviting more users to dense urban networks, thereby enhancing the efficiency of PoI sensing and sharing. 

When no augmentation is applied or $g=1$, our Algorithm~\ref{greedyalgforfiniteidea1_augmented}
only achieve an
approximation guarantee $\frac{1}{k}(1-\frac{1}{e})$, which falls short of being satisfactory. Nevertheless, we surprisingly find in the following subsection that even without augmentation, Algorithm~\ref{greedyalgforfiniteidea1_augmented} can still be enhanced to further improve its approximation guarantee significantly by $k$ times as long as each node of the sensing graph $G_1$ has a user.  

\subsection{Enhanced Algorithm for $G_1$ with Only User Nodes}
  Notice that different sets of $k$ selected paths in the above Problem (\ref{new_obj_nhop})-(\ref{new_obj_nhop_constraint3}) can yield the same welfare contributions, as long as they visit the same set of nodes in the sensing graph $G_1$. This inspires us to look at a typical and interesting case where each node in the sensing graph is associated with only one user, i.e., $G_1=(V,E_1)$. Accordingly, we 
propose a more efficient Algorithm~\ref{adjusted_greedyhireg_alg}, which leverages our Algorithm~\ref{greedyalgforfiniteidea1_augmented} (GPS) as a subroutine. To differentiate from the solution $P$ output by Algorithm \ref{greedyalgforfiniteidea1_augmented} with augmentation $k$, we denote the solution by Algorithm~\ref{adjusted_greedyhireg_alg} as $P'$.
There are mainly three steps in Algorithm~\ref{adjusted_greedyhireg_alg}:
\begin{itemize}
    \item \textit{First}, Algorithm~\ref{adjusted_greedyhireg_alg} partitions paths into several classes according to their start nodes, which are included in the class set $\mathcal{C}$;
    \item \textit{Then}, from each class $C_i\in\mathcal{C}$, Algorithm~\ref{adjusted_greedyhireg_alg} selects the first path output by Algorithm~\ref{greedyalgforfiniteidea1_augmented}, which is included in $P'$;
    \item \textit{Finally}, for each other path in $P$ but not in $P'$, Algorithm~\ref{adjusted_greedyhireg_alg} constructs a new $n$-hop path starting at a different node from any start node of a path in the current set $P'$, and then include this new path in $P'$. 
\end{itemize}

Fig. \ref{figexecutionalg} shows an illustrative example with $k=3$ selected users to broadcast and $n=2$ hops in each path for explaining the execution of Algorithm~\ref{adjusted_greedyhireg_alg}. Here, suppose that Algorithm~\ref{greedyalgforfiniteidea1_augmented} with augmentation factor $k$ returns a solution $P =\{(v_1,v_2,v_7),(v_1,v_4,v_{16}),(v_3,v_{11},v_{14})\}$. In the left-most sub-figure of Fig. \ref{figexecutionalg}, Algorithm~\ref{adjusted_greedyhireg_alg} partitions paths in $P$ into two classes $\{(v_1,v_2,v_7),(v_1,v_4,v_{16})\}$ and $\{(v_3,v_{11},v_{14})\}$.  As shown in blue in the middle sub-figure of Fig. \ref{figexecutionalg}, Algorithm~\ref{adjusted_greedyhireg_alg} then selects to the first path $(v_1,v_2,v_7)$ and $(v_3,v_{11},v_{14})$ of the two partitioned classes, respectively. Finally, in the right-most sub-figure of Fig. \ref{figexecutionalg}, Algorithm~\ref{adjusted_greedyhireg_alg} checks the remaining path $(v_1,v_4,v_{16})$, along which node $v_4$ is the first node different from any start nodes of previously selected paths. This is because the predecessor of $V_4$ on the path, i.e., $v_1$, is the start node of the previously selected path $(v_1,v_2,v_7)$. Accordingly,  
Algorithm~\ref{adjusted_greedyhireg_alg} updates $(v_1, v_4,v_{16})$ to  $(v_4,v_{16},v_{15})$ by moving one hop forward accordingly.

\begin{algorithm}[t]
\caption{\textsc{Adjusted-GPS}}\label{adjusted_greedyhireg_alg}
\textbf{Input}: $I=(G_1,G_2)$, $n$, $k$, $P'=\varnothing$.\\
\textbf{Output}: $P'=\{p'_1,...,p'_k\}$.
\begin{algorithmic}[1] 
\STATE Construct an augmented instance $I_k$ from $I$ by an augmentation factor $k$.
\STATE Obtain a solution $P$ (of paths) for the augmented instance  $I_k$
by Algorithm \ref{greedyalgforfiniteidea1_augmented} with augmentation $k$.
\STATE Partition paths in $P$ into distinct classes according to their start nodes, resulting in a class set denoted by $\mathcal{C}$.
\FOR{each $C_i\in\mathcal{C}$}
\STATE  Denote $C_i(1)$ as the first path selected by Algorithm~\ref{greedyalgforfiniteidea1_augmented}.
\STATE Update $P'=P'+\{C_i(1)\}$.
\STATE Update $C_i\leftarrow C_i-\{C_i(1)\}$.
\ENDFOR
\FOR{each $C_i\in\mathcal{C}$}
\FOR{each $p\in C_i$}
\IF{there exist a node in $p$ that is not a start node of any path selected in set $P'$}
\STATE Denote $p(1')$ as the first node in $p$ that is not a start node of any path selected in set $P'$, where $1'$ indexes the node's order in $p$.
\STATE Extend the sub-path of $p$ that starts from node $p(1')$ to a new $n$-size path, include the new path to $P'$.
\ENDIF
\ENDFOR
\ENDFOR
\end{algorithmic}
\end{algorithm}
The following theorem validates that even without augmentation, Algorithm~\ref{adjusted_greedyhireg_alg} still improves the approximation guarantee up to a factor of $k$ for user-node-only sensing graphs.

\begin{theorem}\label{improvedratio_nhop}
    For the mobile crowd-sensing scenario in a sensing graph $G_1(V,E_1(V))$ of user nodes only, Algorithm~\ref{adjusted_greedyhireg_alg} using Algorithm~\ref{greedyalgforfiniteidea1_augmented} as a subroutine runs in $O(k(\tau-k)\varpi^{2.372})$-time and guarantees an approximation ratio of at least $1-\frac{\varpi-2}{\varpi}(\frac{k-1}{k})^k$, where $\tau$ and $\varpi$ denote the number of $n$-hop paths of interest and the number of nodes in the sensing graph $G_1$, respectively.
\end{theorem}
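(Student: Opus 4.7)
The plan is to relate the output $P'$ of Algorithm~\ref{adjusted_greedyhireg_alg} to the output $P$ of its subroutine Algorithm~\ref{greedyalgforfiniteidea1_augmented} (GPS) invoked with augmentation factor $g=k$, and then invoke Theorem~\ref{theorem_n_hop_forward}. With $g=k$, Theorem~\ref{theorem_n_hop_forward} already yields $\Phi(P) \geq \bigl(1-\frac{\varpi-2}{\varpi}(\frac{k-1}{k})^k\bigr)\Phi(P^{*}(I))$, where $P^{*}(I)$ is the optimum of the original (non-augmented) instance. So if I can show $\Phi(P') \geq \Phi(P)$, together with the fact that $P'$ is a feasible non-augmented solution (exactly $k$ paths from distinct user nodes), the claimed approximation ratio follows immediately.

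To establish $\Phi(P') \geq \Phi(P)$, the first step is to observe from (\ref{obj_function_pre_nhop_forwarded}) and (\ref{obj_function}) that $\Phi$ is monotone in $N_1(\cdot)$, because enlarging $N_1(P)$ only grows the argument of $E_1(\cdot)$ in each summand. Hence it suffices to prove the set-inclusion $N_1(P') \supseteq N_1(P)$, which I would argue node-by-node. Take any path $p=(p(1),p(2),\ldots,p(n+1)) \in P$ and show every $p(j)$ lies in $N_1(P')$. Either $p$ is the first-selected path in its class, in which case $p$ itself lies in $P'$ and the inclusion is trivial; or Adjusted-GPS locates the smallest index $1'$ with $p(1')$ not being a start node of any path already in $P'$. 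By the minimality of $1'$, each $p(j)$ for $j<1'$ is a start node of some path in $P'$ and thus lies in $N_1(P')$. The hypothesis that $G_1$ consists solely of user nodes ensures $p(1')$ is itself a user node, so a length-$n$ path from $p(1')$ exists in $\mathcal{P}$ and gets appended to $P'$; this appended path contains the sub-path $p(1'), p(1'+1), \ldots, p(n+1)$, placing the remaining nodes of $p$ in $N_1(P')$ as well.

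Combining the two pieces gives $\Phi(P') \geq \Phi(P) \geq \bigl(1-\frac{\varpi-2}{\varpi}(\frac{k-1}{k})^k\bigr)\Phi(P^{*}(I))$, matching the claimed ratio. For the running time, the dominant cost is the single call to Algorithm~\ref{greedyalgforfiniteidea1_augmented}, which runs in $O(k(\varpi^n-k)\varpi^{2.372})$ time by Theorem~\ref{theorem_n_hop_forward}; the remaining work in Algorithm~\ref{adjusted_greedyhireg_alg}, namely partitioning $P$ by start node and performing at most $k$ look-ups/extensions in $\mathcal{P}$, is asymptotically dominated. The hardest step I anticipate is the careful node-coverage argument and, in particular, verifying that the ``extend to a new $n$-size path'' operation in Step 14 is always executable under the hypothesis $V'=\varnothing$; I would also need to gracefully handle the corner case in which every node of some $p \in P$ already appears as a start node of a previously-selected $P'$-path, so Adjusted-GPS adds no new path for $p$. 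In such cases $|P'|<k$, but the monotonicity of $\Phi$ lets us pad $P'$ with any extra path without decreasing the welfare, so the approximation argument still goes through.
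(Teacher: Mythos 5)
Your proposal is correct and follows essentially the same route as the paper: the paper likewise reduces to the $g=k$ bound of Theorem~\ref{theorem_n_hop_forward} and proves (as Propositions~\ref{prop_adjusted_nhop_01} and \ref{prop_adjusted_nhop_02}) that $P'$ is feasible for the non-augmented instance and that every node visited by $P$ is visited by $P'$, via the same first-selected-path / first-non-start-node case split you describe. Your only departures are cosmetic improvements — stating the welfare comparison as the inequality $\Phi(P')\geq\Phi(P)$ rather than the paper's claimed equality, and explicitly flagging the $|P'|<k$ corner case — neither of which changes the argument.
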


\section{Experiments with Real Dataset Across Different Network Topologies}\label{section_experiment}
  Now, we examine the empirical performances of our algorithms via simulations, where we apply a publicly available dataset of check-ins collected by Gowalla users \cite{Gowalladata,cho2011friendship}. {\color{black}In the simulation, we select two different parts of sensing areas in northern San Francisco, which feature two typical network topologies. On one hand, our first sensing area focuses on the region with latitude ranging from $37^{\circ}46' 20'' N$ to $37^{\circ}47' N$ and longitude ranging from $122^{\circ}26' 30'' W$ to $122^{\circ}25' 30'' W$, which exhibits a grid network topology as depicted in Fig.~\ref{urban_sensing_fig}. On the other hand, our second sensing area showcases a mesh network topology as illustrated in Fig.~\ref{new_urban_sensing_fig}, spanning from latitude $37^{\circ}47' 40'' N$ to $37^{\circ}48' 20''N$ and from longitude $122^{\circ}27' 55'' W$ to $122^{\circ}27' 40'' W$.

\begin{figure}[h]
   \begin{minipage}{0.23\textwidth}
   \centering
  \includegraphics[width=5.5cm]{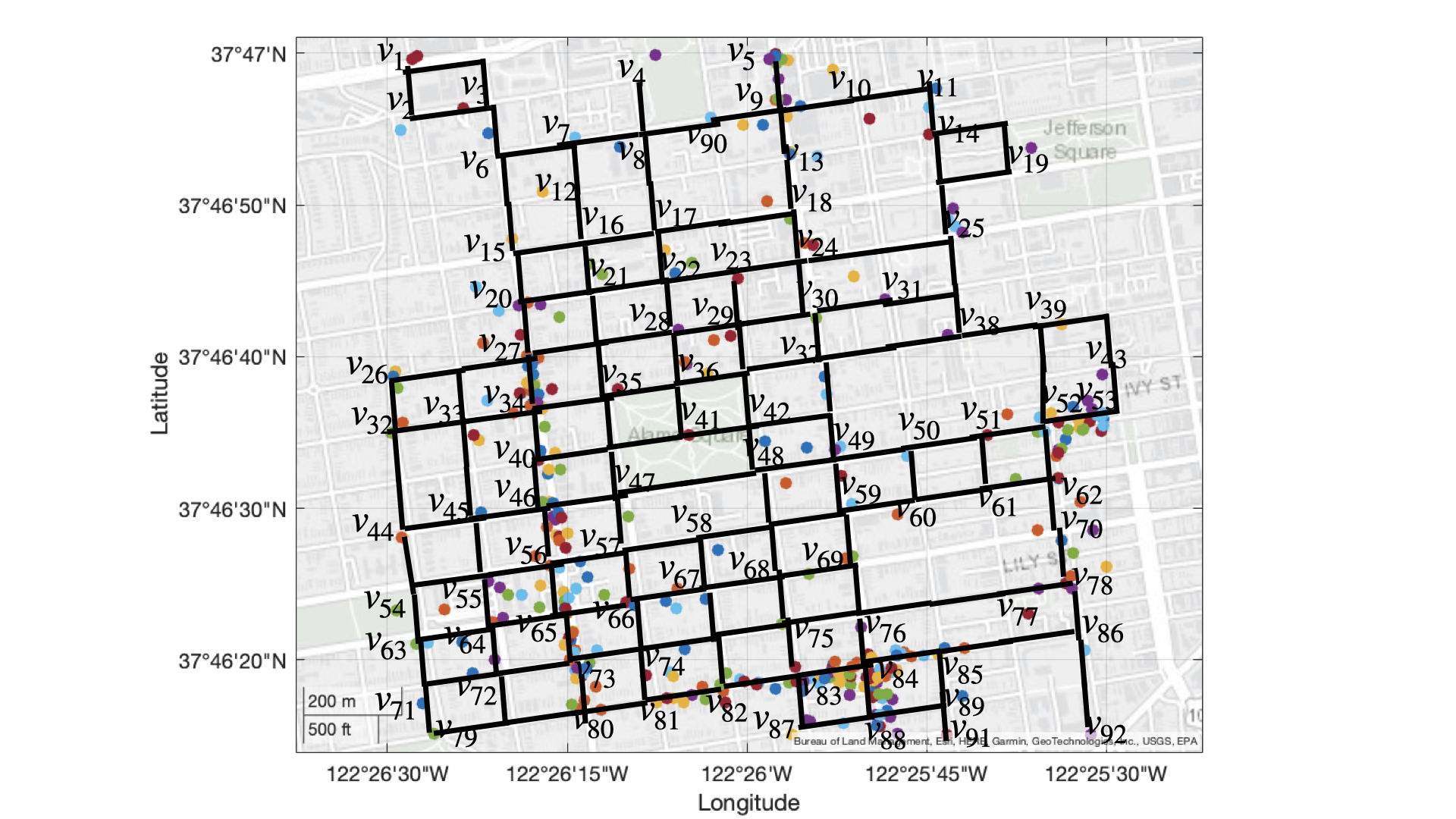}
     \caption{Grid sensing network \cite{Gowalladata}.}\label{urban_sensing_fig}
   \end{minipage}
   \begin{minipage}{0.24\textwidth}
   \centering
\includegraphics[width=4cm]{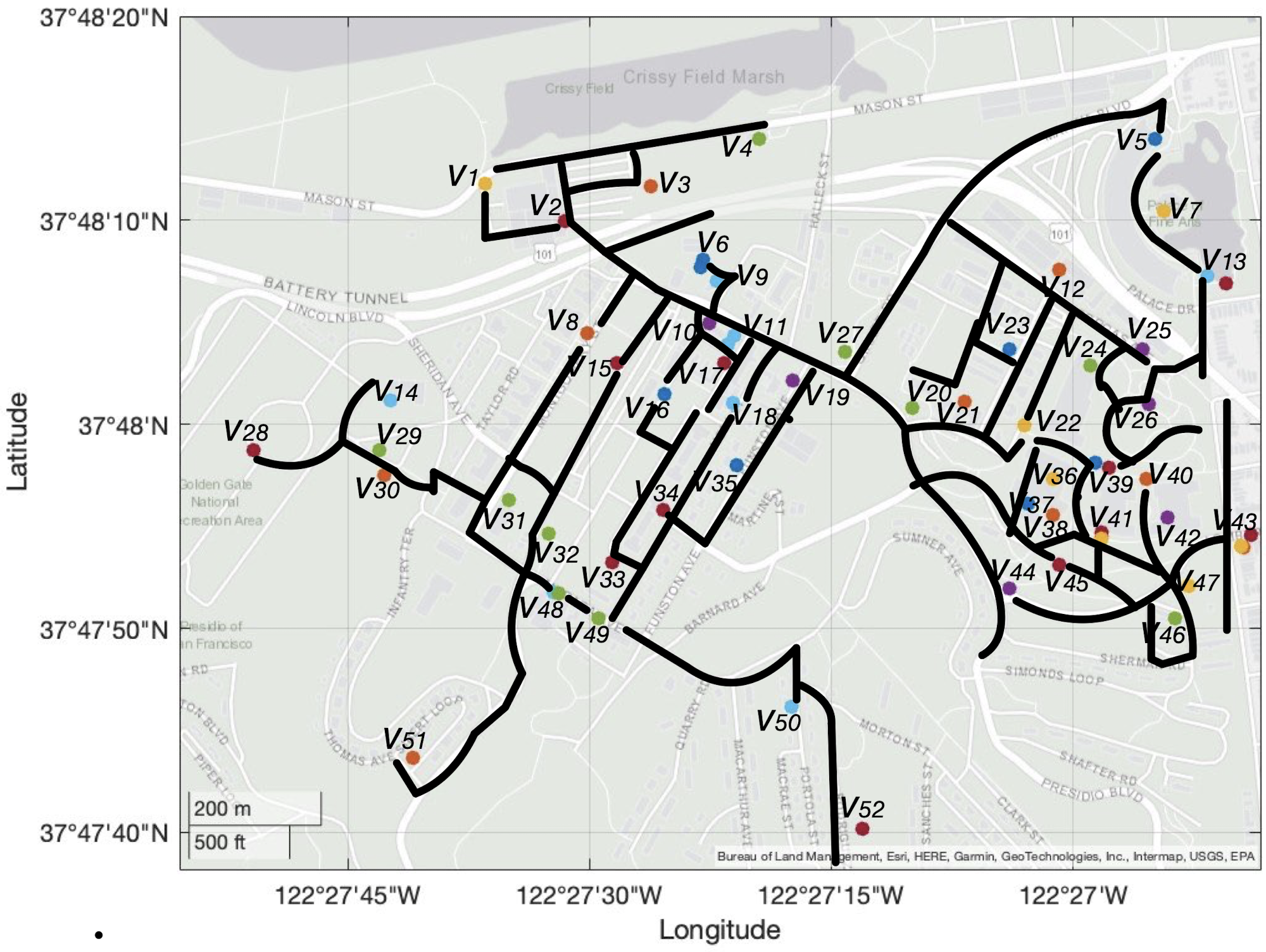}
     \caption{Mesh sensing network \cite{Gowalladata}.}\label{new_urban_sensing_fig}
   \end{minipage}
\end{figure}

Accordingly, we extract the dataset to include the check-in locations falling within the target sensing areas, resulting in a total of 2,176 user check-ins in Fig.~\ref{urban_sensing_fig} for the grid sensing network and a total of 1,483 user check-ins in Fig.~\ref{new_urban_sensing_fig} for the mesh sensing network, respectively. Furthermore, we cluster the 2,176 and 1,483 extracted check-ins into 92 and 52 distinctive locations, respectively, serving as the input user nodes for our sensing graphs depicted in Fig.~\ref{urban_sensing_fig} and Fig.~\ref{new_urban_sensing_fig}.
To input the social graph in the experiments, we follow a commonly adopted Gaussian distribution \cite{tripathi2022optimizing} to randomly construct social connections among 92 and 52 users for the two sensing networks in Fig.~\ref{urban_sensing_fig} and 
Fig.~\ref{new_urban_sensing_fig}, respectively.}

To begin with, we evaluate our algorithm GUS under the static crowd-sensing scenario. Since finding an optimal solution $\Phi(S_k^*)$ is NP-hard, we compare the performance of algorithm GUS with an upper bound ${\rm UB}_1\triangleq\Phi(\varnothing)+ |SCP(G_1,k)|$ for the optimum $\Phi(S^*_k)$. Here, $|SCP(G_1,k)|$ represents the maximum number of edges incident to a group of $k$ users in the sensing graph $G_1$, which can be obtained from the set cover solution with budget $k$ \cite{feige1998threshold}. In other words, $|SCP(G_1,k)|$ reflects the largest possible amount of PoI information that a subset of $k$ selected users could collect. 
Accordingly, we use the following benchmark ratio $\underline{\rho_1}$ to evaluate the performance of our algorithms and benchmarks against the upper bound $UB_1$ for various budget values $k$ in the simulation. 
\begin{equation}\label{expediment_ratio}
    \underline{\rho_1} \triangleq \frac{\Phi(S_k)}{{\rm UB}_1}< \frac{\Phi(S_k)}{\Phi(S_k^*)}.
\end{equation}
This ratio provides a lower bound on the efficiency ratio of the algorithm’s actual performance, enabling an indirect comparison of our algorithm GUS' empirical performances with the optimal solution under various budgets $k$. 

In Fig.~\ref{real_panoramic_35_fig} (for grid sensing network) and Fig.~\ref{new_real_panoramic_35_fig} (for mesh sensing network), we evaluate our Algorithm~\ref{greedyalgforfiniteidea1}'s performances against three benchmarks, which is measured by the performance ratio $\underline{\rho_1}$ defined in (\ref{expediment_ratio}):
\begin{itemize}
    \item \textit{First}, the blue curve represents our proved approximation guarantee $1-\frac{m-2}{m}(1-\frac{1}{k})^k$ in Theorem~\ref{ub_OBJ1F} which reflects the worst-case theoretical  
 guarantee for GUS. 
    \item \textit{Second,} the red curve represents the $k$ user nodes output 
 by the classic set cover solution $SCP(G_1,k)$ \cite{feige1998threshold}.
    \item \textit{Finally,} the black curve indicates the case without social broadcasting, in which no user is selected and each user $v_i$ can only access data from her own PoI collection in $E_1(v_i)$ and her friends' collection in $E_1(N_2(v_1))$. 
\end{itemize}
\begin{figure*}
\begin{minipage}[t]{0.245\textwidth}
    \centering
\includegraphics[width=4.5cm]{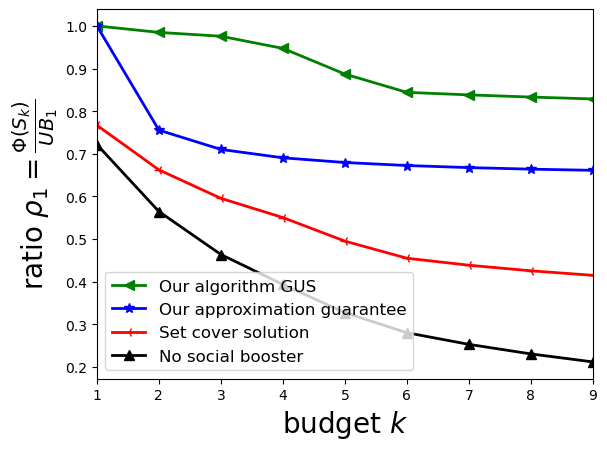}
\caption{Under grid sensing network: performance ratio $\underline{\rho_1}$ in (\ref{expediment_ratio}) of GUS (i.e., Algorithm~\ref{greedyalgforfiniteidea1}) and its benchmarks as compared to the optimum.}
    \label{real_panoramic_35_fig}
\end{minipage}
\begin{minipage}[t]{0.245\textwidth}
    \centering
\includegraphics[width=4.5cm]{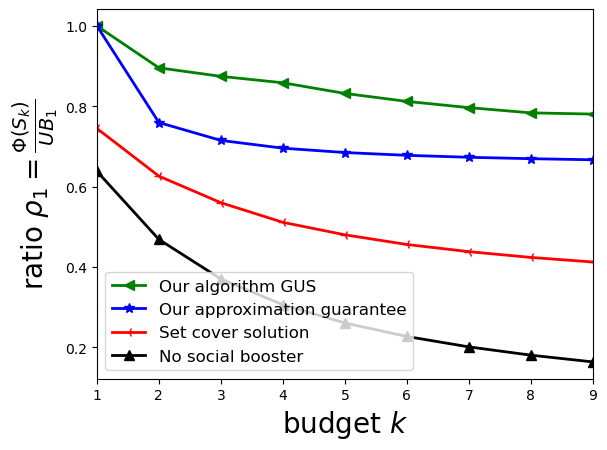}
\caption{Under mesh sensing network: performance ratio $\underline{\rho_1}$ in (\ref{expediment_ratio}) of GUS (i.e., Algorithm~\ref{greedyalgforfiniteidea1}) and its benchmarks as compared to the optimum.}
    \label{new_real_panoramic_35_fig}
    \end{minipage}
\begin{minipage}[t]{0.245\textwidth}
    \centering
\includegraphics[width=4.5cm]{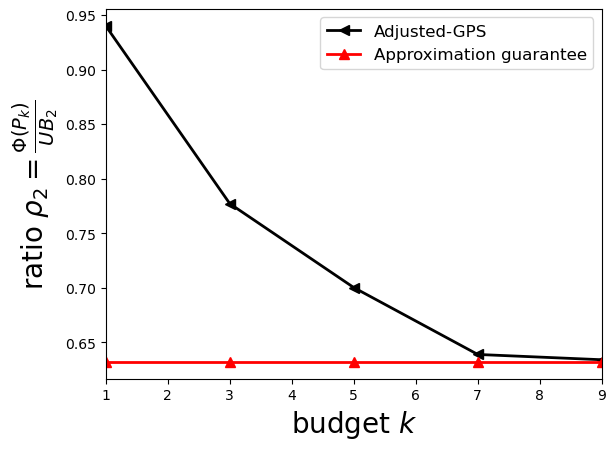}
\caption{Under grid sensing network: performance ratio $\underline{\rho_2}$ in (\ref{expediment_ratio_new}) of Algorithm~\ref{adjusted_greedyhireg_alg} and its approximation guarantee as compared to the optimum.}
\label{real_n_hop_figure}
\end{minipage}
\begin{minipage}[t]{0.245\textwidth}    \centering
\includegraphics[width=4.5cm]{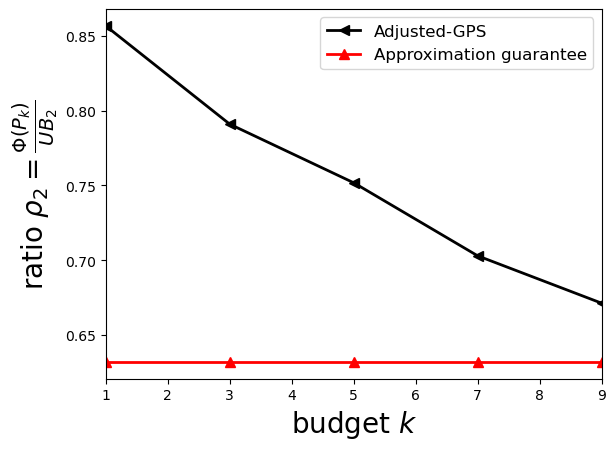}
\caption{Under mesh sensing network: performance ratio $\underline{\rho_2}$ in (\ref{expediment_ratio_new}) of Algorithm~\ref{adjusted_greedyhireg_alg} and its approximation guarantee as compared to the optimum.}
\label{new_real_n_hop_figure}
\end{minipage}
\end{figure*}

Notably, both Fig.~\ref{real_panoramic_35_fig} and Fig.~\ref{new_real_panoramic_35_fig} demonstrate that our algorithm GUS achieves near-optimal performances (close to 1), particularly when the budget $k$ is small. Besides, all four curves for our algorithm and its benchmarks in both figures decrease as the budget $k$ increases. This decrease is due to the fact that the maximum number of edges incident to a group of $k$ users in the sensing graph, represented by $|SCP(G_1,k)|$ in the denominator of the performance ratio $\underline{\rho_1}$, grows significantly with $k$. Interestingly, our empirical results show that GUS' efficiency under grid (resp. mesh) sensing network converges to around 82\% (resp. 80\%) of the optimum as the budget $k$ increases.

This is a significant improvement over the worst-case performance guarantee of $63.2\%$ stated in Theorem~\ref{ub_OBJ1F}. In contrast, solutions derived from the set cover problem remarkably underperform compared to GUS across both sensing network topologies. Specifically, as the budget $k$ increases,  set cover solutions achieve only about half the efficiency of GUS.
Compared to the case without social enhancement or broadcast, as indicated by the black curve, our GUS algorithm can substantially improve the average PoI information and welfare for all users even with just a few users being selected, which is regardless of the network topology. This highlights the great benefit of our social-enhanced sharing approach to selecting and broadcasting hotspot users.

Next, we evaluate the empirical performance of our algorithm \textsc{Adjusted}-GPS for the mobile crowd-sensing scenario under the grid and the mesh sensing networks, respectively. Since the mobile crowd-sensing problem is NP-hard, we use an alternative upper bound, ${\rm UB}_2\triangleq\Phi(\varnothing)+ |SCP(G_1,2k)|$, to approximate the optimum $\Phi(S_k^*)$.
This upper bound ${\rm UB}_2$ is obtained by releasing Constraint (\ref{new_obj_nhop_constraint1}) from  Problem (\ref{new_obj_nhop})-(\ref{new_obj_nhop_constraint3}), and by selecting $n\cdot k$ user nodes for each $k$. 

For different budgets $k$, we evaluate our algorithm \textsc{Adjusted}-GPS using the following new ratio: 
\begin{equation}\label{expediment_ratio_new}
\underline{\rho_2}=\frac{\Phi(P_k)}{{\rm UB}_2}\leq \frac{\Phi(P_k)}{\Phi(P^*_k)}.
\end{equation} 
To the best of our knowledge, no existing benchmark has employed the resource augmentation technique. Therefore, besides the optimum's upper bound $UB_2$, we compare the performance ratio of our algorithm \textsc{Adjusted}-GPS using $\underline{\rho_2}$ from (\ref{expediment_ratio_new}), and contrast it with the theoretical approximation guarantee provided in Theorem~\ref{theorem_n_hop_forward}. Our simulation results for \textsc{Adjusted}-GPS are summarized in Fig.~\ref{real_n_hop_figure} and Fig.~\ref{new_real_n_hop_figure} for the grid and mesh sensing networks, respectively. Both figures show that our \textsc{Adjusted}-GPS algorithm achieves near-optimal performance, particularly when budget $k$ is small. 
Recall that the ratio $\underline{\rho_2}$ applies an upper bound ${\rm UB}_2$ of the optimum as its denominator instead of the actual optimum. As the budget $k$ increases, Fig.~\ref{new_real_n_hop_figure} demonstrates that our \textsc{Adjusted}-GPS algorithm experiences a gradual decrease in its efficiency. As the user/node size scales up  (e.g., from 52 in Fig.~\ref{new_real_n_hop_figure} to 92 in Fig.~\ref{real_n_hop_figure}), we observe an acceleration in the efficiency decline (as the budget $k$ increases) of our \textsc{Adjusted}-GPS. Nonetheless, our \textsc{Adjusted}-GPS converges to a steady efficiency level of approximately 63.2\% of
the optimal solution, which matches our theoretical approximation as proved in Theorem~\ref{theorem_n_hop_forward}. 

\section{Concluding Remarks}\label{section_conclusion}
  This work studies the social-enhanced  PoI sharing problem for maximizing the average amount of useful PoI information to users, which involves the interplay between urban sensing and online social networks. We prove that the problem is NP-hard and renders existing approximation solutions infeasible. By transforming the involved PoI-sharing process across the two networks into matrix computations, we successfully derive the closed-form objective. For static users who collect PoIs only within the vicinity, we leverage the desired properties of the objective function to propose a polynomial-time algorithm with a good approximation guarantee. Furthermore, we apply a new resource-augmentation technique to address a generalized scenario where selected users can move to sense and share additional PoIs, which still guarantees decent approximations. Finally, we conduct simulations across different network topologies, which well corroborate our theoretical results. 

There are several promising directions remaining for future exploration. One natural extension is to design adaptive or online algorithms that can handle dynamic changes in network topology over time. Additionally, it would also be interesting to incentivize users, particularly through pricing mechanisms, to alter their routes to collect more PoIs.


\textbf{Acknowledgement.} This work is supported by the SUTD Kickstarter Initiative (SKI) Grant with no. SKI 2021\_04\_07 and the Joint SMU-SUTD Grant with no. 22-LKCSB-SMU-053.

\nocite{langley00}
\bibliographystyle{IEEEtran}
\bibliography{mybibiligraph}
\begin{IEEEbiography}[{\includegraphics[width=1in,height=1.23in,clip,keepaspectratio]{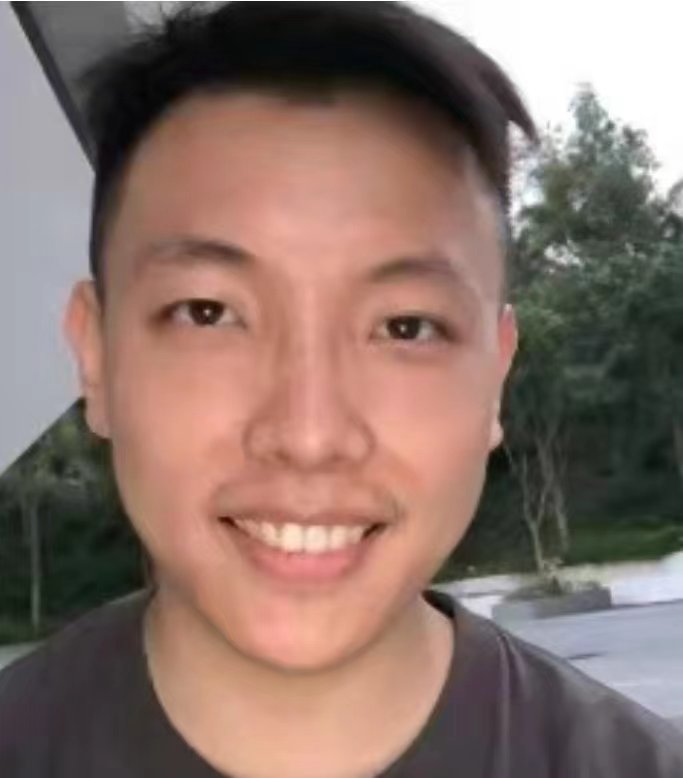}}]{Songhua Li} received
the Ph.D. degree from The City University
of Hong Kong in 2022. He is now a postdoctoral research fellow at 
the Singapore University of Technology and Design (SUTD), where he was a Visiting Scholar in 2019. His research interests span approximation and online algorithm design and analysis, combinatorial optimization, and algorithmic game theory in the context of sharing economy networks.
\end{IEEEbiography}
\begin{IEEEbiography}[{\includegraphics[width=1in,height=1.25in,clip,keepaspectratio]{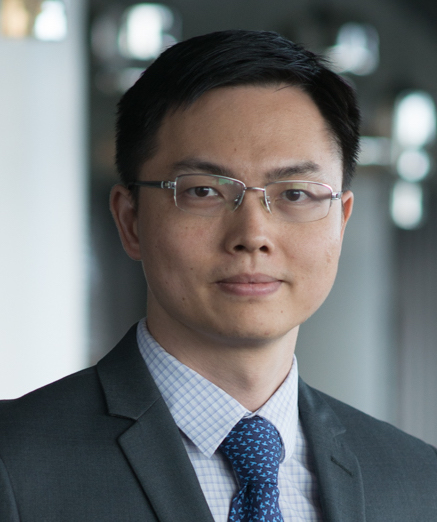}}]{Lingjie Duan} (S’09-M’12-SM’17) received the Ph.D. degree from The Chinese University of Hong Kong in 2012. He is an Associate Professor at the Singapore University of Technology and Design (SUTD) and is an Associate Head of Pillar (AHOP) of Engineering Systems and Design. In 2011, he was a Visiting Scholar at the University of California at Berkeley, Berkeley, CA, USA. His research interests include network economics and game theory, network security and privacy, energy harvesting wireless communications, and mobile crowdsourcing. He is an Associate Editor of IEEE/ACM Transactions on Networking and IEEE Transactions on Mobile Computing. He was an Editor of IEEE Transactions on Wireless Communications and IEEE Communications Surveys and Tutorials. He also served as a Guest Editor of the IEEE Journal on Selected Areas in Communications Special Issue on Human-in-the-Loop Mobile Networks, as well as IEEE Wireless Communications Magazine. He is a General Chair of WiOpt 2023 Conference and is a regular TPC member of some other top conferences (e.g., INFOCOM, MobiHoc, SECON). He received the SUTD Excellence in Research Award in 2016 and the 10th IEEE ComSoc Asia-Pacific Outstanding Young Researcher Award in 2015. 
\end{IEEEbiography}

\newpage
\appendix
\section{Omitted Discussions}\label{appendix_omitteddiscussions}
\subsection{Relation between Our Social-enhanced PoI Sharing Problem And The Set Cover Problem (SCP)}
We note that, despite a relation of our problem to the set cover problem \cite{hochbaum1982approximation,karakostas2009better}, one cannot apply set cover solutions to our problem. 
\begin{remark}[A relation between our problem and the set cover problem (SCP)]\label{remark_conversionproblem}
Through the following four steps, we establish a mapping $G_1(\overline{V},E_1)\rightarrow SCP(G_1,k)$ to convert our problem to a set cover problem with budget $k$:
\begin{itemize}
    \item \textbf{Step 1:} Each edge $e\in E_1$ is mapped to a ground element in SCP. Accordingly, our edge set $E_1$ is mapped to the ground set of $SCP(G_1,k)$. 
    \item \textbf{Step 2:} Each $E_1(v)$  is mapped to a set in the collection of $SCP(G_1,k)$, where $v\in V$.\footnote{This is because we only select user nodes from set $V$. Consequently, nodes in $V'$ are not considered in SCP either.}
    \item \textbf{Step 3:}  $\{E(v)|v\in V\}$ is mapped to the collection of sets in $SCP(G_1,k)$.
    \item \textbf{Step 4:} $SCP(G_1,k)$ looks for a sub-collection of $k$ sets whose union covers the most ground elements.
\end{itemize}
As such, an SCP solution yields a sub-collection of $k$ edge subsets of $\{E(v)|v\in V\}$ that maximizes the number of edges incident to nodes of the sub-collection, returning us those $k$ user nodes in the sub-collection to select. However, as discussed earlier in Section \ref{section_introduction}, such selected user nodes suffer from huge efficiency loss in our problem since most of their incident edges could have been already shared in a relatively well-connected social network. For example, in Fig.~\ref{systemfigure}, assume that user node $6$ happens to have a social connection with each of the other nodes and the hiring budget is reduced to $1$.
It is obvious that broadcasting node 6's PoI collection will not contribute any new content to other users, which is because node 6's PoI collection has already been shared with all other users from the social network. Hence, node $6$ will not be selected by an optimal solution to our problem. However,  node $6$ will be selected by an optimal SCP solution to cover the maximum number of edges in the sensing graph.
\end{remark}
\subsection{Equivalence Relation between The Optimization and Decision Versions of The Set Cover Problem}
 In Fig.~\ref{fig02}, we present the equivalence relation between the decision version and the well-known optimization version of the set cover problem. 
\begin{figure}[h]
    \centering
    \includegraphics[width=9cm]{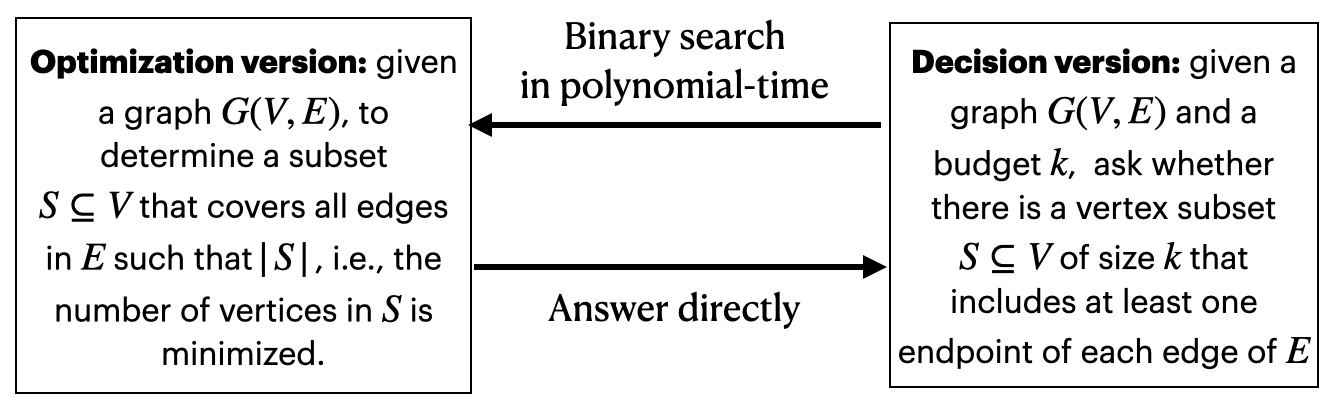}
    \caption{The equivalent relation between the decision and optimization versions of the set cover problem.}
    \label{fig02}
\end{figure}
\subsection{Algorithm~\ref{compputing_phi_varnothing}: Pseudocode for Computing $\Phi(\varnothing)$}
\begin{algorithm}[h]
\caption{\textsc{Computing $\Phi(\varnothing)$ before any hiring}}\label{compputing_phi_varnothing}
\textbf{Input}:  sensing graph $G_1$ and social graph $G_2$.\\
\begin{algorithmic}[1] 
\STATE Construct $\textbf{A}=(a_{i,j})_{1\leq i,j\leq m}$, where 

$$a_{i,j}=\left\{\begin{matrix}
1, & {\rm if\;}i\neq j{\rm \;and\;} (v_i,v_j)\in E_1,\\ 
0,& {\rm otherwise}.
\end{matrix}\right.$$
\STATE Construct $\textbf{B}=(b_{i,j})_{1\leq i,j\leq m}$, where 
$$b_{i,j}=\left\{\begin{matrix}
0, & {\rm if\;}i\neq j{\rm \;and\;} (v_i,v_j)\notin  E_2,\\ 
1,& {\rm otherwise}.
\end{matrix}\right.$$
\STATE Construct $\textbf{C}\triangleq \textbf{AB}$.
\FOR{$v_x\in V$}
\STATE Construct \textbf{$M_{\sigma_x(B)}$} by selecting those columns and rows in matrix $\textbf{A}$ that correspond to either user $v_x$ or $v_x$'s friends in set $N_2(v_x)$.
\STATE $\phi_x(\varnothing)\leftarrow (\sum\limits_{1\leq j\leq m}c_{j,x}-\frac{1}{2}\textbf{e}^T_x \textbf{M}_{\sigma_x(B)}\textbf{e}_x)$.
\ENDFOR
\STATE \textbf{Output} $\Phi(\varnothing)\leftarrow \frac{1}{m}\sum\limits_{1\leq x\leq m}\phi_x(\varnothing)$.
\end{algorithmic}
\end{algorithm}
\subsection{Algorithm~\ref{compputing_phi_si}: Pseudocode for Computing $\Phi(S_i)$ for Static Crowd-sensing without PoI Preference}
\begin{algorithm}[h]
\caption{\textsc{Computing $\Phi(S_i)$ for Static Crowd-sensing without PoI Preference}}\label{compputing_phi_si}
\textbf{Input}:  $G_1=(V,E_1)$, $G_2=(V,E_2)$, $S_i=\{s_1,...,s_i\}$.\\
\begin{algorithmic}[1] 
\STATE Construct matrices $\textbf{A}$ and $\textbf{B}$ by Algorithm \ref{compputing_phi_varnothing}, $\textbf{B}_0\leftarrow \textbf{B}$.
\FOR{$s_j\in(s_1,...,s_i)$}
\STATE Find the index of $s_j$ in set $V$, denoted as $j'$. 
\STATE Update the $j'$-th row entries of matrix $\textbf{B}_{j-1}$ as one.
\ENDFOR
\STATE Compute $\textbf{C}_i\leftarrow \textbf{AB}_i$.
\STATE \textbf{Output} $ \Phi(S_i)\leftarrow\frac{1}{m}\textbf{e}^T\textbf{C}_i\textbf{e}-\frac{1}{2m}\sum\limits_{v_x\in V}\textbf{e}^T_x \textbf{M}_{\sigma_x(B_i)} \textbf{e}_x$.
\end{algorithmic}
\end{algorithm}
\subsection{Algorithm~\ref{compputing_phi_varnothing_restricted}: Pseudocode for Computing $\Phi(S_i)$ for Static Crowd-sensing with PoI Preference}
\begin{algorithm}[h]
\caption{\textsc{Computing $\Phi(S_y)$ for static crowd-sensing with PoI preference}}\label{compputing_phi_varnothing_restricted}
\textbf{Input}:  $G_1$, $G_2$, $m$, $\{E_1^{(1)},...,E_1^{(m)}\}$, $S_y=\{s_1,...,s_y\}$.\\
\begin{algorithmic}[1] 
\STATE Construct $\textbf{B}_0=(b_{i,j})_{1\leq i,j\leq m}$, where 
$$b_{i,j}=\left\{\begin{matrix}
0, & {\rm if\;}i\neq j{\rm \;and\;} (v_i,v_j)\notin  E_2\\ 
1,& {\rm otherwise}
\end{matrix}\right.$$
\FOR{$s_i\in(s_1,...,s_y)$}
\STATE Find the index of $s_i$ in set $V$, denoted as $i'$. 
\STATE Construct matrix $\textbf{B}_{i}$ by updating in $\textbf{B}_{i-1}$ all the $i'$-th row entries to one.
\ENDFOR
\FOR{$v_x\in V$}
\STATE Construct $\textbf{A}^{x}=(a^x_{i,j})_{1\leq i,j\leq m}$,
\begin{equation*}
a^x_{i,j}=\left\{\begin{matrix}
1, & {\rm if\;}i\neq j{\rm \;and\;} (v_i,v_j)\in E_1^{(x)}\\ 
0,& {\rm otherwise}
\end{matrix}\right.
\end{equation*}
\STATE Compute $\textbf{C}^{x}\triangleq \textbf{A}^x \textbf{B}_y$.
\STATE Compute $\textbf{M}_{\sigma_x(B^x)}$.
\STATE $\phi_x(S_y)\leftarrow (\sum\limits_{1\leq j\leq m}c^i_{j,x}-\frac{1}{2}\textbf{e}^T_x \textbf{M}_{\sigma_x(B)} \textbf{e}_x)$.
\ENDFOR
\STATE \textbf{Output} $\Phi(S_y)\leftarrow \frac{1}{m}\sum\limits_{1\leq x\leq m}\phi_x(S_y)$.
\end{algorithmic}
\end{algorithm}
\subsection{Our Computing Approach to $\Phi(S_i)$ Accommodates Various Generalizations}
\begin{remark}[On our computing approach]\label{remark_on_computings}
    Our computing approach accommodates various generalized scenarios, e.g., 
    \begin{itemize}
    \item \textit{Sensing-graph adaptive.}  If the weight of PoI information varies across different edges, we modify the corresponding entry in our sensing graph matrix $\textbf{A}$ to reflect the specific weight. If the sensing graph $G_1$ contains loops (i.e., self-connections), we update the corresponding diagonal entries in our sensing matrix $\textbf{A}$ to one.
         
        \item \textit{Social-information-dissemination adaptive.} To accommodate the desired number of hops for information dissemination in the social graph, we have the flexibility to adjust the social connections of specific users by modifying the corresponding entries in their social matrix $B$.
    \end{itemize}
\end{remark}

\subsection{Discussions on Our Theoretical Guarantee in Theorem~\ref{ub_OBJ1F}.}
\begin{figure}[h]
     \centering    
     \includegraphics[width=5.5cm]{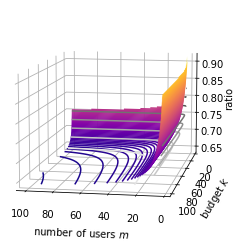}
     \caption{Visualization of the approximation guarantee for Algorithm~\ref{greedyalgforfiniteidea1} \textit{versus} budget $k$ and the number $m$ of selected users.}
\label{ratio_plot_greedyhire}
 \end{figure}
Fig.~\ref{ratio_plot_greedyhire} numerically depicts an overview of our theoretical guarantee versus the budget $k$ and the number $m$ of selected users, from which we can tell that our Algorithm~\ref{greedyalgforfiniteidea1} particularly fits small-size networks or low budgets. Even in a dense network or with a large budget, Fig.~\ref{ratio_plot_greedyhire} shows that  Algorithm~\ref{greedyalgforfiniteidea1} still guarantees at least a ($1-\frac{1}{e}$) proportion of the optimal welfare. 


\section{Omitted Proofs}\label{appendix_omittedproofs}
\subsection{Omitted Proof in Proposition~\ref{nphardnessingeneral}}
In the following Theorem~\ref{nphardness01}, we first establish the NP-hardness under the static crowd-sensing scenario of the problem, which sheds light on the NP-hardness proof for the mobile crowd-sensing scenario.
\begin{theorem}\label{nphardness01}
 The static crowd-sensing scenario of the social-enhanced PoI sharing problem is NP-hard.
\end{theorem}
\begin{proof}[Proof of Theorem~\ref{nphardness01}]
We show the NP-hardness of the static crowd-sensing scenario by a reduction from the decision version of the well-known NP-hard \textit{vertex cover problem} (VCP)\cite{hochbaum1982approximation}. Our following proof focuses on static crowd-sensing without PoI preference. That is every user $v_i\in V$ is interested in the PoI from the whole graph $G_1$, i.e., $E_1^{(i)}=E_1$.

To show our reduction, we construct from VCP's instance graph $G$ an example of our problem by setting 
$G_1=G$ and $G_2=(V, \varnothing)$. Accordingly, our objective function (\ref{obj_function}) can be written as
$$\Phi(S_k)=\frac{1}{m}\sum\limits_{v_i\in V}|E_1(\{v_i\}\cup S_k)|.$$

Next, we will show by two cases that 
VCP answers ``yes" if and only if our problem achieves  $\Phi(S_k)=|E_1|$. 

\noindent\textbf{Case 1. }
``$\Rightarrow$". If the VCP answers ``yes", there exists a subset $S'_k\subset V$ of size $k$ that touches each edge in $E_1$, i.e., $E_1(S'_k)=|E_1|$. Then, we have
\begin{equation}
\begin{split}
\Phi(S'_k)&=\frac{1}{m}\sum\limits_{v_i\in V}|E_1(\{v_i\}\cup S_k')|\\
&\geq \frac{1}{m}\sum\limits_{v_i\in V}|E_1(S'_k)|\\
&=|E_1|.
\end{split}
\end{equation}
Since $|E_1|$ is known to be an upper bound on the objective function (\ref{obj_function}), we have $\Phi(S_k')\leq |E_1|$. Thereby, we know  $\Phi(S_k')=|E_1|$ telling that
$S'_k$ is our optimal solution as well.

\noindent \textbf{Case 2.}
``$\Leftarrow$". If our problem finds $S_k\subseteq V$ that achieves
$\frac{1}{m}\sum\limits_{v_i\in V}|E_1(\{v_i\}\cup S_k)|=|E_1|$, the following holds for each $v_i\in V$, 
\begin{equation}\label{theorem1_eq02}
   |E_1(\{v_i\}\cup S_k)|=|E_1|, 
\end{equation}
which is also because $|E_1|$ upper bounds our  objective (\ref{obj_function}).
For a particular user $s_j\in S_k$, the above (\ref{theorem1_eq02}) implies
\begin{equation}
    E_1(\{s_j\}\cup S_k)= E_1(S_k)=|E_1|.
\end{equation}
In other words, $S_k$ touches every edge in $E_1$, and VCP answers ``yes" accordingly.

Cases 1 and 2 conclude this proof.
\end{proof}
Built upon Theorem~\ref{nphardness01}, we further show in the following theorem that our mobile crowd-sensing problem is also NP-hard, which also benefits from a special structure of the sensing graph we observe in a feasible reduction. 
\begin{theorem}\label{theorem_nphardness_hop_forward}
Our social-enhanced PoI sharing problem in the mobile crowd-sensing scenario is NP-hard.  
\end{theorem}
  \begin{proof}[Proof of Theorem~\ref{theorem_nphardness_hop_forward}]
 In the mobile crowd-sensing scenario, we show the NP-hardness of the problem by a reduction from the static crowd-sensing scenario without PoI preference. 
  Given an instance $I_{\rm o}(G_1(V,E_1),G_2(V,E_2)$ of the static crowd-sensing scenario without PoI preference, we construct an example $I_{\rm e}=\{G'_1,G_2\}$ for mobile crowd-sensing in the following manner: for each $v_i\in V$, we 
  \begin{itemize}
      \item construct another $|E_1|+n$ dummy nodes as summarized in the new set $V'_i=\{ v'_{i1},..., v'_{in},...., v'_{i(|E_1|+n)}\}$,
\item and construct edges as summarized in the following set $ E'_i=\{(v'_{ij},v'_{ij+1})|_{j=0,1,...,n-1},(v'_{in},v'_{ij})|_{j=n+1,...,n+|E_1|}\}$.
\end{itemize}
    So on and so forth, we obtain the sensing graph for the mobile crowd-sensing scenario as $ G'_1=(V\cup\bigcup\limits_{v_i\in V}V'_i,E_1\cup\bigcup\limits_{v_i\in V}E'_i)$. We illustrate in Fig.~\ref{pic_n_hop_nphardness} an example of the graph $G'_1$ that is constructed from a given graph $G_1$ in Fig.~\ref{systemfigure}. 
    \begin{figure}[!t]
\centering\includegraphics[width=6cm]{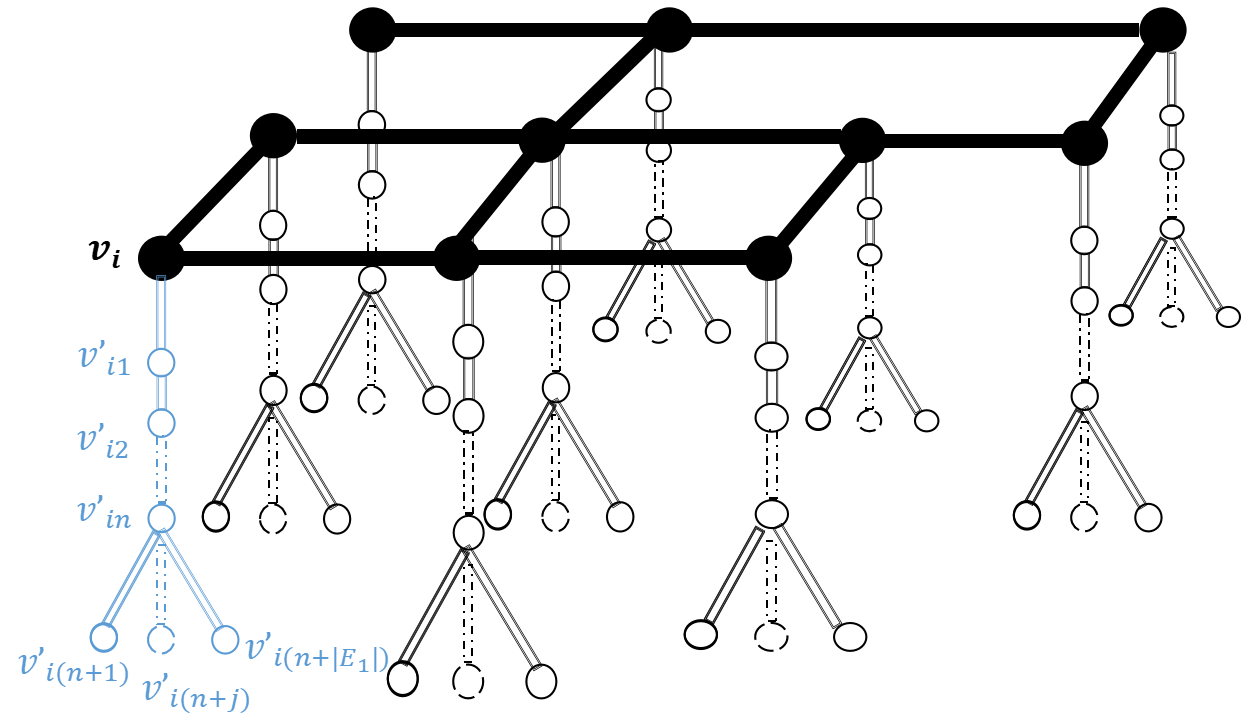}
    \caption{The graph constructed from the sensing graph in Fig. \ref{systemfigure}.}
    \label{pic_n_hop_nphardness}
\end{figure}
The following proposition provides a building block for NP-hardness proof, which reveals a connection between the optimal solutions to the static and mobile
scenarios. 
\begin{lemma}\label{prop01_in_nphardness_forwardsetting_new}
An optimal solution to the mobile crowd-sensing scenario with input $I_{\rm e}=(G'_1,G_2)$ $\iff$  an optimal solution to the static crowd-sensing scenario with input $I_{\rm o}=(G_1,G_2)$.
      \end{lemma}
    \begin{proof}[Proof of Lemma~\ref{prop01_in_nphardness_forwardsetting_new}]
         Note that only nodes in $V$ could there exist users stand-by for collecting data. For a selected user, say $v_i\in V$, we consider the following $n$-hop-forward move in the mobile crowd-sensing scenario:
\begin{equation}\label{best_n_hop_forward_movement}    (v_i,v'_{i1},v'_{i2},...,v'_{in}).
\end{equation}
After the above $n$-hop-forward move, $v_i$ could collect PoI information from both edges $E_1$ that are incident to $v_i$ and those edges in $E'_i$ (which only correspond to $v_i$ herself). We notice the following:
         \begin{equation}
             |E'_i|=n+|E_1|>|E_1|.
         \end{equation}
This implies that the best $n$-hop-forward move for $v_i$, in terms of increasing her contributed data to the system, should follow the path in  (\ref{best_n_hop_forward_movement}). Hence, regardless of which $k$ users are selected, (\ref{obj_function_pre_nhop_forwarded}) consistently indicates that
 \begin{equation}
\phi_i(S_k)=|E_1(\{v_i\}\cup N_{2}(v_i)\cup S_k|+k\cdot (n+|E_1|).
\end{equation}
 Further, by (\ref{obj_function}), we have
 \begin{equation}
    \Phi(S_k)=\underbrace{\frac{nk+k|E_1|}{m}}_{\rm constant}+\underbrace{\frac{1}{m}\sum\limits_{v_i\in V}|E_1(\{v_i\}\cup N_{2}(v_i)\cup S_k)|}_{{\rm objective}\;(\ref{obj_function_pre})}.
 \end{equation}
 This proves Proposition~\ref{prop01_in_nphardness_forwardsetting_new}.
      \end{proof}   
With Lemma \ref{prop01_in_nphardness_forwardsetting_new}, the NP-hardness for the mobile crowd-sensing scenario holds readily. 

This completes the proof.
\end{proof}

\subsection{Omitted Proof in Lemma \ref{lemma_01}}\label{appendix_lemma_01}
\begin{proof}
We note the following equations, in which the first equation is due to 
the matrix multiplication principle and the second equation holds by reorganizing the right-hand-side terms of the first equation.
\begin{eqnarray}    \label{lemma01_eq01}
\sum\limits_{1\leq j\leq m}c_{j,x}&=&\underbrace{(a_{1,1}b_{1,x}+a_{1,2}b_{2,x}+...+a_{1,m}b_{m,x})}_{i.e., c_{1,x}}\nonumber    \\
&\;&+\underbrace{(a_{2,1}b_{1,x}+a_{2,2}b_{2,x}+...+a_{2,m}b_{m,x})}_{i.e., c_{2,x}}  \nonumber    \\
&\;&\;\;\vdots \nonumber    \\
&\;&+\underbrace{(a_{m,1}b_{1,x}+a_{m,2}b_{2,x}+...+a_{m,m}b_{m,x})}_{i.e., c_{m,x}}\nonumber\\
&=&\underbrace{a_{1,1}b_{1,x}+a_{2,1}b_{1,x}+...+a_{m,1}b_{1,x}}_{i.e., b_{1,x}\cdot(\sum_{1\leq j\leq m}a_{j,t})}\nonumber    \\
&\;&+\underbrace{(a_{1,2}b_{2,x}+a_{2,2}b_{2,x}+...+a_{m,2}b_{2,x})}_{i.e., b_{2,x}\cdot(\sum_{1\leq j\leq m}a_{j,t})} \nonumber    \\
&\;&\;\;\vdots \nonumber    \\
&\;&+\underbrace{(a_{1,m}b_{m,x}+a_{2,m}b_{m,x}+...+a_{m,m}b_{m,x})}_{i.e., b_{m,x}\cdot(\sum_{1\leq j\leq m}a_{j,t})}\nonumber\\
&=&\sum\limits_{1\leq t\leq m}b_{t,x}\cdot(\sum_{1\leq j\leq m}a_{j,t}).
\end{eqnarray}
Since $\textbf{A}$ is a symmetric matrix, see Equation (\ref{metrixa_eq}), $\sum_{1\leq j\leq m}a_{j,t}$ indicates the number of edges that are incident to user $v_t\in V$.
Due to Equation (\ref{matrixb_eq}), we know that $b_{t,x}$ equals one only when user $v_t$ is either exactly $v_x$ or a neighbor of $v_x$. In other words, $\sum\limits_{1\leq t\leq m}b_{t,x}\cdot(\sum_{1\leq j\leq m}a_{j,t})$ counts the number of edges that are incident to nodes in $\{v_x\}\cup N_2{v_x}$ respectively and sums up all those counts together. The following two propositions support our result.
\begin{proposition} \label{prop_01}
$\sum\limits_{1\leq t\leq m}b_{t,x}\cdot(\sum\limits_{1\leq j\leq m}a_{j,t})$ counts an edge that connects two nodes of $\{v_x\}\cup N_2(v_x)$ two times, counts an edge that connects a node of $\{v_x\}\cup N_2(v_x)$ and a node of $V-\{v_x\}\cup N_2(v_x)$ one time, and counts an edge that connects two nodes of $V-\{v_x\}\cup N_2(v_x)$ zero time.
\end{proposition}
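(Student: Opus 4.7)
The plan is to unpack the inner and outer sums separately and then apply an edge-incidence double-counting argument. First I would interpret the inner sum $\sum_{1 \leq j \leq m} a_{j,t}$. By construction of $\textbf{A}$ in (\ref{metrixa_eq}), each entry $a_{j,t}$ equals $1$ precisely when $(v_j, v_t)\in E_1$ with $j \neq t$ and equals $0$ otherwise; moreover, $\textbf{A}$ is symmetric. Hence $\sum_{j} a_{j,t}$ is exactly $\deg_{G_1}(v_t)$, the number of edges in $E_1$ that are incident to the node $v_t$.

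Next I would interpret the factor $b_{t,x}$. By (\ref{matrixb_eq}), $b_{t,x} = 1$ exactly when $t=x$ or $(v_t,v_x)\in E_2$, i.e., when $v_t \in \{v_x\}\cup N_2(v_x)$; otherwise $b_{t,x} = 0$. Combining the two observations, the whole expression simplifies to a restricted sum of sensing-graph degrees:
\begin{equation*}
\sum\limits_{1\leq t\leq m}b_{t,x}\cdot\Bigl(\sum_{1\leq j\leq m}a_{j,t}\Bigr) \;=\; \sum_{v_t \in \{v_x\}\cup N_2(v_x)} \deg_{G_1}(v_t).
\end{equation*}

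The core step is a standard edge-endpoint double-counting argument. Every edge $e = (v_p, v_q)\in E_1$ contributes exactly $1$ to $\deg_{G_1}(v_p)$ and exactly $1$ to $\deg_{G_1}(v_q)$, so its total contribution to the restricted degree sum above equals the number of its endpoints that belong to $\{v_x\}\cup N_2(v_x)$. Splitting into the three disjoint cases on this number of endpoints then yields the proposition: an edge with both endpoints in $\{v_x\}\cup N_2(v_x)$ is counted twice, an edge with exactly one endpoint in $\{v_x\}\cup N_2(v_x)$ is counted once, and an edge with no endpoint in $\{v_x\}\cup N_2(v_x)$ is counted zero times. There is no real obstacle here; the only subtlety worth mentioning explicitly is to verify that $v_t$ running over the index set $\{1,\ldots,m\}$ covers precisely the user nodes $V$ (so that non-user nodes in $V'$, which do not appear as rows/columns of $\textbf{A}$ and $\textbf{B}$, are correctly excluded as potential centers $v_t$ in the counting), while an edge incident to such a non-user node can still contribute through its user-node endpoint.
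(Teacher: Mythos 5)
Your proof is correct and takes essentially the same route as the paper's: both reduce $\sum_{1\leq t\leq m}b_{t,x}(\sum_{1\leq j\leq m}a_{j,t})$ to the restricted degree sum $\sum_{v_t\in\{v_x\}\cup N_2(v_x)}\deg_{G_1}(v_t)$ and then count each edge once per endpoint lying in $\{v_x\}\cup N_2(v_x)$, which gives exactly the two/one/zero trichotomy. One tiny caveat: in the static setting where this proposition is invoked, $\textbf{A}$ is the $m\times m$ adjacency matrix over user nodes only, so your closing remark that an edge incident to a node of $V'$ ``still contributes through its user-node endpoint'' does not apply to $\textbf{A}$ as literally defined; this is immaterial here because the proposition only concerns edges recorded in $\textbf{A}$.
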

\begin{proof}
Note that each edge $(v_i,v_j)$ could be counted at most twice by counting edges that are incident to the two end nodes $v_i$ and $v_j$, respectively. 
Due to our model, an edge that connects two nodes in $V-\{v_x\}\cup N_2(v_x)$ is available to $v_x$, and hence, could not be counted in  $\sum\limits_{1\leq j\leq m}c_{j,x}$. An edge that connects a node in $\{v_x\}\cup N_2(v_x)$ and a node in $V-\{v_x\}\cup N_2(v_x)$ is counted only once when counting the edges that are incident to the node in $\{v_x\}\cup N_2(v_x)$. An edge that connects to two nodes in $\{v_x\}\cup N_2(v_x)$ will be counted twice when counting edges that are incident to the two nodes respectively.
\end{proof}

\begin{proposition}\label{prop_02}
     $\frac{1}{2}\textbf{e}_x^T \textbf{M}_{\sigma_x(B)} \textbf{e}_x$ indicates the number of different edges counted in $\sum\limits_{1\leq t\leq m}b_{t,x}\cdot(\sum\limits_{1\leq j\leq m}a_{j,t})$ that connect two nodes in $\{v_x\}\cup N_2(v_x)$.
\end{proposition}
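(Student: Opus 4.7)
The plan is to unpack the quadratic form $\mathbf{e}_x^T \mathbf{M}_{\sigma_x(B)} \mathbf{e}_x$ as a sum over entries of a submatrix of the sensing adjacency matrix $\mathbf{A}$, and to identify this submatrix as the adjacency matrix of an induced subgraph of $G_1$.

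First, I would pin down what $\mathbf{M}_{\sigma_x(B)}$ actually represents combinatorially. By definition, $\sigma_x(B)$ indexes the zero entries of the $x$-th column of $\mathbf{B}$, which (by the construction of $\mathbf{B}$ in Equation~(\ref{matrixb_eq})) are exactly those users that are neither $v_x$ nor a social neighbor of $v_x$. Deleting from $\mathbf{A}$ all rows and columns indexed by $\sigma_x(B)$ thus leaves precisely the principal submatrix of $\mathbf{A}$ indexed by $\{v_x\}\cup N_2(v_x)$. Since $\mathbf{A}$ is the symmetric $\{0,1\}$-adjacency matrix of $G_1$ with zero diagonal, this minor $\mathbf{M}_{\sigma_x(B)}$ is exactly the adjacency matrix of the subgraph of $G_1$ induced on the vertex set $\{v_x\}\cup N_2(v_x)$.

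Next, I would compute the quadratic form. Since $\mathbf{e}_x$ is the all-ones column vector of length $|\{v_x\}\cup N_2(v_x)|$, the product $\mathbf{e}_x^T \mathbf{M}_{\sigma_x(B)} \mathbf{e}_x$ equals the sum of all entries of $\mathbf{M}_{\sigma_x(B)}$. Every edge $(v_i,v_j)$ with both endpoints in $\{v_x\}\cup N_2(v_x)$ contributes exactly two $1$'s to this sum (one at position $(i,j)$ and one at position $(j,i)$, using symmetry and the fact that $i\neq j$ since the diagonal is zero), while all other entries contribute $0$. Hence $\mathbf{e}_x^T \mathbf{M}_{\sigma_x(B)} \mathbf{e}_x$ equals twice the number of edges of $G_1$ whose two endpoints both lie in $\{v_x\}\cup N_2(v_x)$, and dividing by $2$ gives the claimed count.

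Finally, to close the loop with the statement of Proposition~\ref{prop_02}, I would invoke Proposition~\ref{prop_01}: the edges double-counted in $\sum_{1\leq t\leq m} b_{t,x}\cdot\bigl(\sum_{1\leq j\leq m}a_{j,t}\bigr)$ are exactly those with both endpoints in $\{v_x\}\cup N_2(v_x)$, which is the same set whose cardinality I just computed as $\tfrac{1}{2}\mathbf{e}_x^T \mathbf{M}_{\sigma_x(B)} \mathbf{e}_x$. I do not anticipate a serious obstacle here; the only subtle point is to verify carefully that the rows/columns retained when forming $\mathbf{M}_{\sigma_x(B)}$ correspond to the complement of $\sigma_x(B)$ in the right index set (i.e., to $\{v_x\}\cup N_2(v_x)$ rather than its complement), which follows directly from the definitions above.
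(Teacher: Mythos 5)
Your proposal is correct and follows essentially the same route as the paper's own proof: identify $\textbf{M}_{\sigma_x(B)}$ as the adjacency matrix of the subgraph of $G_1$ induced on $\{v_x\}\cup N_2(v_x)$, observe that the all-ones quadratic form sums all entries, and use symmetry (plus the zero diagonal) to conclude each qualifying edge is counted exactly twice. Your version is, if anything, slightly more explicit about why the retained indices are the complement of $\sigma_x(B)$ and why the diagonal plays no role.
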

\begin{proof}
 For each user, say $v_x\in V$, her corresponding matrix $\textbf{M}_{\sigma_x(B)}$ consists of those rows and columns of the matrix $\textbf{A}$ whose indices in matrix $\textbf{C}$ equal 1. In other words, $\textbf{M}_{\sigma_x(B)}$ reflects the road connection in $G_1$ among (users) nodes of $\{v_x\}\cup N_2(v_x)$ (which includes $v_x$ and her neighbors in 
 graph $G_2$).  Thus, an entry $(i,j)$ of $\textbf{M}_{\sigma_x(B)}$ equals one if and only if $v_i$ and $v_j$ are in the intersection $\{v_x\}\cup N_2(v_x)$ and are connected in graph $G_1$. Since $\textbf{A}$ is a symmetric matrix, its minor matrix $\textbf{M}_{\sigma_x(B)}$ is also symmetric.
Hence, $\textbf{e}_x^T \textbf{M}_{\sigma_x(B)} \textbf{e}_x$, which is the sum of all entries of $\textbf{M}_{\sigma_x(B)}$, doubles the number of different edges that connect two nodes in $\{v_x\}\cup N_2(v_x)$. This completes the proof.
\end{proof}
Therefore, Lemma~\ref{lemma_01} follows accordingly.
\end{proof}
\subsection{Omitted Proof in Proposition~\ref{theorem_02}}\label{appendix_theorem_02}
\begin{proof}
    Due to objective function (\ref{obj_function}) and Lemma \ref{lemma1_formulation}, we have 
    \begin{equation}
    \begin{split}
        \Phi(\varnothing)&=\frac{1}{m}\sum\limits_{v_i\in V}   |E_1(\{v_i\}\cup N_1(v_i))|\\
        &=\frac{1}{m}\sum\limits_{1\leq i\leq m}  \sum\limits_{1\leq j\leq m}c_{j,i}-\frac{1}{m}\sum\limits_{v_i\in V} \frac{1}{2}\textbf{e}_i^T \textbf{M}_{\sigma_i(B)} \textbf{e}_i\\
        &=\frac{1}{m}\textbf{e}^T\textbf{C}\textbf{e}-\frac{1}{2m}\sum\limits_{v_x\in V}\textbf{e}^T_x \textbf{M}_{\sigma_x(B)} \textbf{e}_x.
        \end{split}
    \end{equation}
    This completes the proof.
\end{proof}
\subsection{Omitted Proof in Lemma~\ref{lemma_02}}\label{appendix_lemma_02}
\begin{proof}
For any two node subsets $V',V''\subseteq V$ and any subset $E_1^{(i)}\subseteq E_1$,  we observe \begin{equation}\label{set_to_value_eq}
        |E_i(V'\cup V'')|=|E_i(V')|+|E_i(V'')|-|E_i(V'\cap V'')|,
    \end{equation}
and \begin{equation}\label{set_to_value_eq_fact2}
    \begin{split}
        &|E_1^{(i)}\cap E_i(V'\cup V'')|+|E_1^{(i)}\cap E_i(V'\cap V'')|\\&=|E_1^{(i)}\cap E_i(V')|+|E_1^{(i)}\cap E_i(V'')|.
         \end{split}
    \end{equation}
In the following two cases, we prove that function $\Phi(S_i)$ in (\ref{obj_function}) achieves monotonicity and submodularity, respectively.

\noindent \textbf{Result 1.} (\textit{monotonicity} of $\phi_i(S)$)

For an arbitrary subset $S\subseteq V$ and an arbitrary element $v\in V-S$, the following holds readily due to Equation (\ref{obj_function_pre}):

\begin{equation}\label{lemma1_eq03}
\begin{split}
    &\phi_i(S\cup\{v\})\\
    &=|E_1(\{v_i\}\cup N_{2}(v_i)\cup S\cup\{v\})|  \\
    &=|E_1(\{v_i\}\cup N_{2}(v_i)\cup S)\cup E_1(\{v\})|\\
    &\geq |E_1(\{v_i\}\cup N_{2}(v_i)\cup S)| =  \phi_i(S),
\end{split}
\end{equation}

\noindent
for any user $v_i\in V$, telling that $\phi_i(S)$ is non-decreasing over subsets  $S\subseteq V$. Further by Equation (\ref{obj_function}), we know $\Phi_i(S)$ is non-decreasing in $S\subseteq V$. 

\noindent \textbf{Result 2.} (\textit{submodularity} of $\phi_i(\cdot)$)

For any two subsets $S_1,S_2\subseteq V$ with $S_1\subseteq S_2$, and an element $v\in V$, we have

\begin{equation*}\label{lemma1_eq02}
\begin{split}
    &\phi_i(S_2\cup \{v\})-\phi(S_2)\\
    &=|E_1(\{v_i\}\cup N_{2}(v_i)\cup S_2\cup \{v\})|\\
    &\quad\;- |E_1(\{v_i\}\cup N_{2}(v_i)\cup S_2)|\\
    &=|E_1(\{v\})|-|E_1(\{v_i\}\cup N_{2}(v_i)\cup S_2)\cap \{v\})|\\
    &\leq |E_1(\{v\})|-|E_1(\{v_i\}\cup N_{2}(v_i)\cup S_1)\cap \{v\})|\\
    &=\phi_i(S_1\cup \{v\})-\phi_i(S_1),
\end{split}
\end{equation*}

\noindent
where the first equation is due to Equation (\ref{obj_function_pre}),     
 the first inequality holds since $S_1\subseteq S_2$, and the second and the last equations hold by (\ref{set_to_value_eq}). Therefore, each $\phi_i(\cdot)$ is submodular over subsets of $V$. Further by Equation (\ref{obj_function}), we know $\Phi(\cdot)$ is submodular as well.

Due to (\ref{set_to_value_eq_fact2}) and the fact that $E_1^{(i)}$ is a fixed edge subset, each $\phi_i(\cdot)$ meets the monotonicity and submodularity in the static crowd-sensing scenario. Further, the summation over every $\phi_i(\cdot)$, which is $\Phi(\cdot)$ is submodular and non-decreasing as well.

This proves Lemma~\ref{lemma_02}
\end{proof}
\subsection{Omitted Proof in Theorem~\ref{ub_OBJ1F}}
\begin{proof}
To start with, we discuss the \textit{time complexities} of our algorithms.  It is clear that the time complexity of Algorithm \ref{greedyalgforfiniteidea1} largely depends on the single step in its ``for-loop", which is further determined by Algorithms \ref{compputing_phi_varnothing} and \ref{compputing_phi_si}. Before discussing the specific time complexities of our algorithms in different scenarios, we note that a very recent work \cite{alman2021refined} provides an $O(m^{2.372})$-time matrix multiplication approach for computing two $m\times m$ matrices, which accelerate our approaches as well.

 (\textit{Running time for static crowd-sensing without PoI preference}). The running time of Algorithm \ref{compputing_phi_varnothing} is dominated by the step of matrix multiplication between two $m\times m$ matrices $A$ and $B$, which can be done in $O(m^{2.372})$-time. In Algorithm \ref{compputing_phi_si}, the initialization step in obtaining $A$ and $B$ can be done in $O(m^2)$ time by checking each entry of the matrix, the ``for-loop" can be done within $O(km)$-time, and $C$ can be got in $O(m^{2.372})$-time, leading to an overall running time of $O(m^{2.372})$. Further, to compute each $s_i$, Algorithm~\ref{greedyalgforfiniteidea1} needs to check ($m-i+1$) possibility in the single step of the ``for-loop" of Algorithm \ref{greedyalgforfiniteidea1}. In total, Algorithm~\ref{greedyalgforfiniteidea1} runs in $O(k(m-k)m^{2.372})$-time for the static crowd-sensing scenario without PoI preference.

(\textit{Running time for static crowd-sensing with PoI preference}). Due to Algorithm \ref{compputing_phi_varnothing_restricted}, the part of running time in computing each $\Phi(S_i)$ is dominated by the second ``for-loop", in which each iteration takes $O(m^{2.372})$-time leading to an overall running time of $O(m^{3.372})$. Further, to compute each $s_i$, Algorithm~\ref{greedyalgforfiniteidea1} needs to check ($m-i+1$) possibilities in which each takes time $O(m^{3.372})$. In total, Algorithm~\ref{greedyalgforfiniteidea1} runs in $O((m-k)km^{3.372})$-time. 


Benefits from a recent matrix multiplication approach \cite{alman2021refined}, our matrix computation-based algorithm can complete each step in its ``for-loop" in $O(m^{2.372})$-time. In total, our Algorithm~\ref{greedyalgforfiniteidea1}  runs in $O(k(m-k)m^{2.2372})$-time. 

To show the approximation guaranteed by Algorithm~\ref{greedyalgforfiniteidea1}, we introduce a new function $\widetilde{\Phi}(S_k)$ as follows:
\begin{equation}\label{converted_obj}
    \widetilde{\Phi}(S_k)= \Phi(S_k)-\Phi(\varnothing),
\end{equation}
where $\Phi(\varnothing)$ and  $\Phi(S_k)$ are returned by subroutines Algorithms~\ref{compputing_phi_varnothing} and \ref{compputing_phi_si}, respectively.

By Lemma \ref{lemma_02}, one can easily check that $\widetilde{\Phi}(S_k)$ is non-decreasing, sub-modular and $ \widetilde{\Phi}(\varnothing)=0$. Moreover, it follows
\begin{equation}
\begin{split}
    &\arg\max\limits_{v_x\in V}\left\{\widetilde{\Phi}(S_{i-1}\cup\{s_x\})-\widetilde{\Phi}(S_{i-1})\right\}\\&=\arg\max\limits_{v_x\in V}\left\{{\Phi}(S_{i-1}\cup\{s_x\})-{\Phi}(S_{i-1})\right\}.
    \end{split}
\end{equation}
In the ``for-loop" part of Algorithm~\ref{greedyalgforfiniteidea1}, by replacing function $\Phi(\cdot)$ with $\widetilde{\Phi}(\cdot)$, the following holds by Lemma~\ref{submodular_supportlemma}:
\begin{equation}\label{theorem3_eq03}
    \widetilde{\Phi}(S_k)\geq [1-(\frac{k-1}{k})^k]\cdot \widetilde{\Phi}(S_k^*),
\end{equation}
which implies  
\begin{equation}\label{theorem3_eq04}
    \Phi(S_k)\geq [1-(\frac{k-1}{k})^k]\cdot \Phi(S_k^*)+(\frac{k-1}{k})^k\Phi(\varnothing),
\end{equation}
due to the fact that $\Phi(\varnothing)$ remains unchanged regardless of the specific sets $S_k$ and $S_k^*$.

Hence, the approximation ratio follows
\begin{equation}\label{theorem3_eq05}
\begin{split}
    \rho \geq  [1-(\frac{k-1}{k})^k]+(\frac{k-1}{k})^k\frac{\Phi(\varnothing)}{\Phi(S_k^*)}.
\end{split}
\end{equation}
To reveal our approximation guarantee, we need further bound $\frac{\Phi(\varnothing)}{\Phi(S_k^*)}$ in (\ref{theorem3_eq05}). Thereby, we further discuss two cases.

\noindent\textbf{Case 1.} In the static crowd-sensing scenario without PoI preference, we notice, on the one hand, that $|E_1(V)|$ is a natural upper bound on $|\Phi(S_k^*)|$, i.e., 
\begin{equation}\label{theorem3_eq_bound_opt}
    |\Phi(S_k^*)|\leq |E_1(V)|.
\end{equation}
On the other hand,
\begin{equation}\label{theorem3_eq06}
\begin{split}
\Phi(\varnothing)&=\frac{1}{m}\sum\limits_{v_i\in V}|E_1(\{v_i\}\cup N_{2}(v_i))|\\
&\geq \frac{1}{m}\sum\limits_{v_i\in V}|E_1(\{v_i\})|=\frac{2|E_1|}{m},
\end{split}
\end{equation}
where the first equation is due to Equations (\ref{obj_function_pre}) and (\ref{obj_function}),  the first inequality holds by the monotonicity of the function $\Phi(\cdot)$, and the second equation holds since each edge $(v_x,v_y)\in V$ is counted twice in the summation $|\sum\limits_{v_i\in V}|E_1({v_i})|$ because it is considered separately for $E_1({v_x})$ and $E_1({v_y})$.
Plugging Inequalities (\ref{theorem3_eq_bound_opt}) and (\ref{theorem3_eq06}) in Inequality (\ref{theorem3_eq05}), yields
\begin{equation*}
\begin{split}
    \rho&\geq 1-(\frac{k-1}{k})^k+(\frac{k-1}{k})^k\frac{\frac{2|E_1|}{m}}{|E_1|}=1-\frac{m-2}{m}(\frac{k-1}{k})^k.
\end{split}
\end{equation*}

\noindent\textbf{Case 2.} For static crowd-sensing with 
 PoI preference,
we have
\begin{equation}\label{theorem_case2_basic eq}
\begin{split}
\frac{\Phi(\varnothing)}{\Phi(S_k^*)}&\geq\frac{\frac{1}{m}\sum\limits_{v_i\in V}|E_1^{(i)}\cap E_1(\{v_i\}\cup N_{2}(v_i))|}{\frac{1}{m}\sum\limits_{v_i\in V}|E_1^{(i)}|}\\
&\geq \frac{\frac{1}{m}\sum\limits_{v_i\in V}| E_1(\{v_i\})|}{\frac{1}{m}\sum\limits_{v_i\in V}|E_1^{(i)}|}\geq \frac{2}{m},
\end{split}
\end{equation}
where the first inequality holds because ${\frac{1}{m}\sum\limits_{v_i\in V}|E_1^{(i)}|}$ is a natural upper bound on the optimal solution in this setting, and 
 {\color{black} the second inequality holds by $E_1(v_i)\subseteq E_1^{(i)}$, and the last one inequality holds by  $E_1^{(i)} \subseteq E_1$.}

Finally, we have by plugging (\ref{theorem_case2_basic eq}) in (\ref{theorem3_eq05}) that
\begin{equation}
    \rho\geq 1-\frac{m-2}{m}(\frac{k-1}{k})^k.
\end{equation} 
Cases 1-2 conclude the proof.
\end{proof}

\subsection{Omitted Proof in Theorem~\ref{theorem_n_hop_forward}}
\begin{proof}
{\color{black}We start by considering the time complexity of Algorithm~\ref{greedyalgforfiniteidea1_augmented}.
Due to our matrix multiplication approach in Theorem~\ref{thm4.3_for_computing_Phi}, we can obtain each $\Phi(S_i)$ in $O(\varpi^{2.372})$-time  where $\varpi=|V\cup V'|$. To compute each $p_i$, Line 3 in our Algorithm~\ref{greedyalgforfiniteidea1_augmented} takes at most ($\tau-i+1$) iterations in which each iteration takes $O(\varpi^{2.372})$-time. Thereby, the whole ``for-loop" of Algorithm~\ref{greedyalgforfiniteidea1_augmented} runs in  $O(k(\tau-k)\varpi^{2.372})$-time. Together with its initial step for constructing $\mathcal{P}$ in 
$O(m(|V'|+m)^2)$-time, Algorithm~\ref{greedyalgforfiniteidea1_augmented} totally runs in $O(k(\varpi^n-k)\varpi^{2.372})$-time.}

Similar to the proof of Lemma \ref{lemma_02}, it is easy to check that our new objective (\ref{new_obj_nhop}) for the mobile crowd-sensing scenario remains monotonicity and submodularity.  Given an instance $I=(G_1,G_2)$ of our problem, denote $I_x$ as the augmented instance with factor $g=x$. The only difference between $I_x$ and $I$ is that each user node of the sensing graph $G_1$ in instance $I_x$  contains exactly $x$ identical users instead of exactly one in the original instance $I$. 
Denote $P$ and $P^*$ as the set of paths that are generated by our Algorithm~\ref{greedyalgforfiniteidea1_augmented} with augmentation $g=x$ and an optimal solution, respectively.

Before showing our approximation result,  we note a feature of our Algorithm~\ref{greedyalgforfiniteidea1_augmented}, which possibly narrows the search space by the ``if-else" of Algorithm \ref{greedyalgforfiniteidea1_augmented} in some of its selection iterations.  This feature makes the submodularity result in  Lemma~\ref{submodular_supportlemma} not be applied straightforwardly in deriving the approximation guarantee of Algorithm~\ref{greedyalgforfiniteidea1_augmented}. In other words, the result in Lemma~\ref{submodular_supportlemma}
could be applied only when the search space does not change (i.e., the ``if-else" branches of Algorithm \ref{greedyalgforfiniteidea1_augmented} is reduced). As such, our Algorithm~\ref{greedyalgforfiniteidea1_augmented}  with augmentation factor $g=k$ does not change its search space. In the following, we discuss two cases.

\noindent\textbf{Case 1}. ($g=k$).
 The approximation ratio of our Algorithm~\ref{greedyalgforfiniteidea1_augmented}  with augmentation $g=k$ follows
\begin{equation}\label{equations_augmentation00}
\begin{split}
\frac{\Phi(P(I_k))}{\Phi(P^*(I))}
&\geq\frac{\Phi(P(I_k))}{\Phi(P^*(I_k))} \\
&\geq [1-(\frac{k-1}{k})^k]+(\frac{k-1}{k})^k\frac{\Phi(\varnothing)}{\Phi(P^*(I_k))}\\
&\geq 1-\frac{\varpi-2}{\varpi}(\frac{k-1}{k})^k,
\end{split}
\end{equation}
in which the first inequality holds since resource augmentation does not influence an optimal solution, the second inequality holds according to (\ref{theorem3_eq05}), and the last inequality holds by (\ref{theorem3_eq06}) and $\Phi(P^*(I_k))\leq \varpi$.

\noindent\textbf{Case 2.} ($1\leq g\leq k-1$).

To show the approximation guarantee of this case, let us consider such a solution from the following steps: 
\begin{itemize}
    \item \textit{First}, apply Algorithm~\ref{greedyalgforfiniteidea1_augmented} with augmentation $k$ to obtain  $P(I_k)$ as a subroutine. 
    \item {\color{black}\textit{Then}, categorize paths in $P(I_k)$ into distinct classes based on their start nodes, with each class forming a separate set $C_i$ in $\mathcal{C}$. As a consequence, any two paths from the same class $C_i\in \mathcal{C}$ possess the same start node, while any two paths from two different classes $C_i\in \mathcal{C}$ and $C_j\in \mathcal{C}$ hold different start nodes.}
    \item \textit{Next}, from each distinct class $C_i\in \mathcal{C}$, we output the first path selected to by Algorithm~\ref{greedyalgforfiniteidea1_augmented}, forming set $\overline{P}_{C_i}$. By summarizing all output paths from sets $\overline{P}_{C_i}$, we obtain a new solution
    $\overline{P}(I_k)$, i.e., $\overline{P}(I_k)\triangleq\bigcup\limits_{C_i\in \mathcal{C}}\overline{P}(C_i)$.
\end{itemize}
We refer to the above solution $\overline{P}(I_k)$ as an \textit{intermediate solution} to Algorithm~\ref{greedyalgforfiniteidea1_augmented}. In our following proof, we use this intermediate solution $\overline{P}(I_k)$ as a bridge that connects the welfare $P(I_g)$ of our Algorithm~\ref{greedyalgforfiniteidea1_augmented} with any augmentation $g$ and the welfare $P(I_k)$ of  Algorithm~\ref{greedyalgforfiniteidea1_augmented} with a specific augmentation factor $k$. 
Note that $\overline{P}(I_k)$ is actually a solution to an augmented instance $I_g$ with factor $g$, which implies the following due to the selection process of Algorithm~\ref{greedyalgforfiniteidea1_augmented}:
\begin{equation}\label{appendix_theorem51_eq_01}
    \Phi(P(I_g))\geq \Phi(\overline{P}(I_k)).
\end{equation}
Notice that the earlier a path is selected to a path class  $\mathcal{C}_i$, the more welfare benefits the path contributes to the system. This tells the following relation.
\begin{equation}\label{appendix_theorem51_eq_02}
     \Phi(\overline{P}(C_i))\geq \frac{g}{k}  \Phi(C_i),
\end{equation}
which implies
\begin{equation}\label{appendix_theorem51_eq_03}
    \Phi(\overline{P}(I_k))\geq \frac{g}{k}  \Phi(P(I_k)).
\end{equation}
By combining Inequalities (\ref{appendix_theorem51_eq_01}) and (\ref{appendix_theorem51_eq_03}), we have  
\begin{equation}\label{equations_augmentation01}
    \Phi(P(I_g))\geq \frac{g\Phi(P(I_k))}{k}
\end{equation}
Finally, the approximation ratio of Algorithm~\ref{greedyalgforfiniteidea1_augmented} 
 follows
 \begin{equation}
\begin{split}
\frac{\Phi(P(I_g))}{\Phi(P^*(I))}&\geq\frac{g \Phi(P(I_k))}{k\Phi(P^*(I))}\geq\frac{g \Phi(P(I_k))}{k\Phi(P^*(I_k))}\\&\geq \frac{g}{k}[1-\frac{\varpi-2}{\varpi}(\frac{k-1}{k})^k],
\end{split}
\end{equation}
in which the first inequality holds by (\ref{equations_augmentation01}), the second inequality holds as $\Phi(P^*(I))\leq \Phi(P^*(I_k))$ and the third inequality holds by (\ref{equations_augmentation00}).
\end{proof}
\subsection{Omitted Proof in Theorem~\ref{improvedratio_nhop}}
\begin{proof}
The running time of our Algorithm~\ref{adjusted_greedyhireg_alg} is dominated by Step 3 for constructing a solution $P$ for the augmented instance $I_k$. Hence, Algorithm~\ref{adjusted_greedyhireg_alg} runs in time $O(k(\tau-k)\varpi^{2.372})$ due to Theorem \ref{improvedratio_nhop}). The following two propositions (as proved in Appendices \ref{prop_adjusted_nhop_01} and \ref{prop_adjusted_nhop_02}) further reveal our approximations.

\begin{proposition}\label{prop_adjusted_nhop_01}
 $P'$ is a solution to the original instance $I$ (that is, without augmentation), i.e., any two paths in $P'$ start from different user nodes.
\end{proposition}
\begin{proof}[Proof of Proposition \ref{prop_adjusted_nhop_01}]
According to the partition rule in Step 4 of Algorithm \ref{adjusted_greedyhireg_alg},
paths in different partitioned classes start from different user nodes, which implies that those paths included in the first ``for loop" of Algorithm \ref{adjusted_greedyhireg_alg} start from distinctive user nodes. Further, due to the ``if" condition of the second ``nested-for-loop", all the $k$ paths included in $P'$ start from $k$ distinctive user nodes. The proposition follows.
\end{proof}

\begin{proposition}\label{prop_adjusted_nhop_02}
 $\Phi(P(I_k))=\Phi(P')$.
 \end{proposition}
 \begin{proof}[Proof of Proposition \ref{prop_adjusted_nhop_02}]
    As per our transformed Problem (\ref{new_obj_nhop})-(\ref{new_obj_nhop_constraint3}), we know, as mentioned earlier, that different sets of selected paths contribute the same as long as they visit the same set of nodes in the sensing graph $G_1$. Denote $V(P')$ and $V(P)$ as the set of nodes that are visited by paths in sets $P'$ and $P$, respectively. To further show the proposition, it is sufficient to prove 
    \begin{equation}\label{eq0001_adjusted_theorem}
        V(P')\subseteq V(P).
    \end{equation}
    Note that in each path class $C_i\in \mathcal{C}$, the first path selected by  Algorithm \ref{adjusted_greedyhireg_alg} is also included in the solution $P'$ of  \textsc{Adjusted-GPS}. For each other path of $p$ in some class $C_i\in \mathcal{C}$, we discuss two cases as follows.
    
    \textbf{Case 1.} (There exist nodes in $p$ that do not appear in start nodes of paths included in $P'$ at the moment of selection.)

    Note that  \textsc{Adjusted-GPS} always selects the first one $p({\rm first})$ of such nodes, implying that all those nodes prior to the selected one in $p$ are visited by paths in the current $P'$. Further,  \textsc{Adjusted-GPS} includes to $P'$ a new path that is constructed by extending the sub-path of $p$ starting from node $p({\rm first})$ to $n$-size; this further guarantees that all nodes on $p$ are visited by some paths in $P'$. Then, the proposition follows.

    \textbf{Case 2.} Each node on $p$ appears as a start nodes of some path included in $P'$ at the moment of selection. The proposition follows accordingly.
\end{proof}

Proposition \ref{prop_adjusted_nhop_01} in conjunction with Proposition  \ref{prop_adjusted_nhop_02} tells 
\begin{equation}
\begin{split}
    \frac{\Phi(P'(I))}{\Phi(P^*(I))}\geq \frac{\Phi(P(I_k))}{\Phi(P^*(I_k))}\geq 1-\frac{\varpi-2}{\varpi}(\frac{k-1}{k})^k.
\end{split} 
\end{equation}
in which the first inequality is due to $\Phi(P^*(I))\leq \Phi(P^*(I_k))$ and Proposition \ref{prop_adjusted_nhop_02}, and the last inequality holds by (\ref{equations_augmentation00}).
\end{proof}
%





%
%
%
\end{document}